\documentclass[12pt, reqno]{amsart}

\usepackage{amssymb, amsmath, amsthm,mathrsfs}
\usepackage[backref]{hyperref}
\usepackage[alphabetic,backrefs,lite]{amsrefs}
\usepackage{amscd}   
\usepackage{fullpage}
\usepackage[all]{xy} 
\usepackage{setspace}
\usepackage{mathtools}

\setstretch{1.15}

\DeclareFontEncoding{OT2}{}{} 


\usepackage[usenames,dvipsnames]{color}


\newtheorem{lemma}{Lemma}[section]
\newtheorem{theorem}[lemma]{Theorem}
\newtheorem{proposition}[lemma]{Proposition}

\newtheorem{claim*}{Claim}

\newtheorem{example}[lemma]{Example}

\theoremstyle{definition}
\newtheorem{remark}[lemma]{Remark}

\newcommand{\Aff}{{\mathbb A}}
\newcommand{\G}{{\mathbb G}}

\newcommand{\PP}{{\mathbb P}}

\newcommand{\F}{{\mathbb F}}

\newcommand{\Z}{{\mathbb Z}}



\newcommand{\calC}{{\mathcal C}}

\newcommand{\calE}{{\mathcal E}}

\newcommand{\calL}{{\mathcal L}}

\newcommand{\calO}{{\mathcal O}}


\newcommand{\frakd}{{\mathfrak d}}

\newcommand{\frakm}{{\mathfrak m}}

\newcommand{\hideqed}{\renewcommand{\qed}{}}


\DeclareMathOperator{\im}{im}

\DeclareMathOperator{\Pic}{Pic}

\DeclareMathOperator{\ev}{ev}

\DeclareMathOperator{\id}{id}

\DeclareMathOperator{\opt}{opt}


\numberwithin{equation}{section}
\numberwithin{table}{section}

\usepackage{color}

\newcommand{\defi}[1]{\textsf{#1}} 

\title[Codes on Surfaces]{Locally recoverable codes on surfaces}

\author{Cec\'ilia Salgado}
\address{Instituto de Matem\'atica, Universidade Federal do Rio de Janeiro, Ilha do Fund\~ao, 21941-909, Rio de Janeiro, Brasil}
\email{salgado@im.ufrj.br}
\urladdr{http://www.im.ufrj.br/\~{}salgado}

\author{Anthony V\'arilly-Alvarado}
\address{Department of Mathematics MS 136, Rice University, 6100 S.\ Main St., Houston, TX 77005, USA}
\email{av15@rice.edu}
\urladdr{http://math.rice.edu/\~{}av15}

\author{Jos\'e Felipe Voloch}
\address{School of Mathematics and Statistics, University of Canterbury, Private Bag 4800, Christchurch 8140, New Zealand}
\email{felipe.voloch@canterbury.ac.nz}
\urladdr{http://www.math.canterbury.ac.nz/\~{}f.voloch}

\date{}
\subjclass[2010]{Primary 94B27, 14G50}
\keywords{Error correcting codes, locally recoverable codes, algebraic surfaces}

\begin{document}

	\begin{abstract}
	A linear error correcting code is a subspace of a finite-dimensional space over a finite field with a fixed coordinate
	system.
Such a code is said to be locally recoverable with locality $r$ if, 
for every coordinate, its value at a codeword can be deduced from the 
value of (certain) $r$ other coordinates of the codeword. These codes have found many recent
applications, e.g., to distributed cloud storage. We will discuss the problem of
constructing good locally recoverable codes and present some constructions
using algebraic surfaces that improve previous constructions and sometimes provide codes that are optimal in
a precise sense. 
The main conceptual contribution of this paper is to consider surfaces
fibered over a curve {in such a way that each recovery set is constructed from points in a single fiber.}
This allows us to use the geometry of the fiber to guarantee
the local recoverability and use the global geometry of the surface
to get a hold on the standard parameters of our codes. We look in detail 
at situations where the fibers are rational or elliptic curves and provide
many examples applying our methods.

	\end{abstract}

	\maketitle
	

\section{Introduction}

{Motivated by applications to distributed cloud storage, Gopalan et.\ al.~\cite{GopalanHuangEtAl2012} introduced a particular class of error correcting codes that efficiently correct erasures, known now as \emph{locally recoverable codes}.  The successful application of algebraic geometry to the construction of error-correcting codes~\cite{AGBook} naturally prompted the search for locally recoverable codes using algebro-geometric methods \cites{Malmskog2016,BargTamoEtAl2015,BargEtAl2017,Xing,MT,GHM,MTT}.  In particular, \cite{BargTamoEtAl2015} gave a systematic way to produce optimal locally recoverable codes.

{Algebro-geometric} codes are constructed from algebraic varieties, but the one-dimensional case of curves is the most amply studied, while surfaces and higher dimensional varieties have received less attention.  The purpose of this article is to present new systematic constructions of locally recoverable codes using surfaces
fibered over a curve in such a way that each recovery set is constructed from points in a single fiber. We use the geometry of the fiber to guarantee local recoverabity the global geometry of the surface
to get a hold on the standard parameters of our codes. We start by setting up
a general framework for such constructions and showing {how} some previous 
constructions of locally recoverable codes fall into this framework.
We then specialize our setup to consider codes constructed
 using ruled surfaces and elliptic surfaces.  Some of the examples we produce are optimal (in the sense of achieving equality in inequality \ref{dist-bound}) and are long in the sense that they have, for example, length $n = 4q$, where $q$ is the size of the alphabet.
These codes are longer than the other known explicit codes with same recoverability and dimension; {however, they} have bounded recoverability.
We obtain the following theorem as a corollary of Theorem \ref{prop:codesP1P1} (see Example \ref{example-4q}). 
\begin{theorem}
For an integer $d$ divisible by $4$ and an integer $b \le q$ 
such that $4b\geq d$, {there exist} optimal {locally recoverable} codes {over $\F_q$} with parameters
\[
(n,k,d,r) = \left(4b, 3b - \frac{3}{4}d + 1, d, 3\right).
\]
\end{theorem}

For arbitrarily large recoverability, we construct, for every prime $p$,
codes over $\F_{p^2}$ of recoverability $p$, length $n$ about $2p^2$, distance $d$ for any
$d \le n, (p+1)\mid d$, having dimension just shy of the optimal $p(n-d)/(p+1)$.
The precise statement is as follows (Theorem \ref{ell-thm})
\begin{theorem}
For every odd prime (power) $p$ and integer $d \le 2(p+1)(p-2), (p+1)\mid d$,
there exists a locally recoverable code $\calC$ over $\F_{p^2}$ of recoverability $p$, length $n=2(p+1)(p-2)$,
minimum distance $d$ and dimension
$$k = \frac{p(n-d)}{p+1}-\frac{p-1}{2}.$$
\end{theorem}

We believe that codes in this range are new.}

{Ultimately, the codes we construct are obtained by evaluating functions on a curve lying on a surface, and thus can be viewed as codes on curves. However, our proofs of the various properties these codes enjoy crucially rely on the internal geometry of the ambient surface.  This point of view guided our work throughout, so we have kept the perspective it affords.}

{The work of \cite{MT} also uses curves embedded in higher dimensional varieties to 
construct locally recoverable codes. Their construction has some similarities
and some differences to ours. We compare the two constructions, once we set up
some terminology, in Section \ref{mt}
}
\subsection{Locally recoverable codes}
	
	Let $\F_q$ be the finite field of $q$ elements. A linear error correcting code is a subspace $\calC$ of $\F_q^n$ for
	some $n$, which is called the length of $\calC$. We denote by $k$ the dimension of $\calC$ as a $\F_q$-vector space and
	we denote by $d$ the minimum distance of $\calC$, defined as the minimum number of nonzero coordinates among the nonzero elements of $\calC$.
	
	The code $\calC$ is said to be \defi{locally recoverable (LR)} with locality $r$ if, for each $i=1,\ldots, n$, there is a subset $J_i \subset \{1,\ldots, n\}-\{i\},
\#J_i = r$ (called the \defi{recovery set}), such that, if we know the values $c_j$ for $j \in J_i$ of the coordinates of any $c \in \calC$, then we can recover $c_i$.
Codes with small locality can be used in distributed storage systems as they
can reconstruct data erasures with smaller storage overhead than traditional back-ups. 
It is desirable to have codes with small locality, large dimension (equivalently, high information rate $k/n$) 
and large minimum distance for these applications. However, these parameters are not independent: {they satisfy the basic constraint} \cite{GopalanHuangEtAl2012, Dimakis}
\begin{equation}
\label{dist-bound}
d \le n -k - \lceil k/r \rceil +2, 
\end{equation}
and $\calC$ is called an \defi{optimal LR code} if equality holds. We write $d_{\opt}$ for the right hand side of~\eqref{dist-bound}.

An explicit construction of optimal LR codes with $n \le q$ is given in \cite{TamoBarg2014}. 
There are known upper bounds for the length of LR codes and some
general existence theorems \cite{Guruswami2018}.
One of the purposes of this paper is to explicitly construct longer optimal LR codes.

The LR codes we construct have the property that
the sets $J_i \cup \{i\}$ form
a partition of $\{1,\ldots, n\}$ but not every LR code has this property.
We end this subsection by giving a simple proof of
\eqref{dist-bound} for LR codes with this property.

\begin{theorem}
Consider an [n,k,d]-LR code of locality $r$ {whose recovery sets
$J_i$ have the property that the union of the sets} $J_i \cup \{i\}$ form
a partition of $\{1,\ldots, n\}$. Then \eqref{dist-bound} holds.
\end{theorem}

\begin{proof}
Note that the recovery map for any coordinate on inputs all equal to $0$ is $0$, since the
zero vector is a codeword. Now take $b = \lceil k/r \rceil - 1$ so $br <k$
and choose $b$ disjoint sets of the form $J_i \cup \{i\}$ and set the $r$ coordinates indexed by
each $J_i$ from this choice to $0$. In addition,
choose $k-1-br$ coordinates outside the union of the chosen $J_i \cup \{i\}$
and set them equal to $0$ as well. Thus, a total of $k-1$ conditions
are imposed and there exists a non-zero codeword satisfying them all as our code has dimension $k$. 
But this non-zero codeword also has zero $i$-th coordinates for all of 
the chosen $J_i \cup \{i\}$. This gives us $b$ additional
zero coordinates. Hence the weight of this codeword is at most
$n -(k-1)-b = n -k -\lceil k/r \rceil +2$.
\end{proof}
					
\subsection{Algebro-geometric codes}
\label{subsec:AGcodes}
	
	 Let $X$ be a quasi-projective \footnote{Many codes are naturally described as algebro-geometric (AG) codes in quasi-projective varieties that are not projective. Witness the classical Reed-Muller codes; they are AG codes in affine space $\Aff^n$. Every quasi-projective variety is an open subset of a projective variety, by definition, but the choice of a projective compactification is not unique, e.g., $\Aff^n$ can be embedded in projective space $\PP^n$ or in the product $(\PP^1)^n$, which are different. In this paper, we use specific choices of compatifications when determining parameters for our codes (e.g., when we consider Hirzebruch surfaces).  In other circumstances, it is preferable not to, e.g., a curve can always be embedded in a unique projective curve without increasing the number of singular points. This is not true in higher dimensions.} algebraic variety over a finite field $\F_q$. Concretely, this means that we select an open subset of affine or projective space where a collection of polynomials vanish. 
	 The function field of $X$ is the set of functions that can be expressed
	 as quotients of polynomials in the coordinates of the ambient space modulo the equations defining $X$. 
	 Given a point $P$ on $X$ and an element
	 $\sigma$ of the function field of $X$, if the denominator of $\sigma$ does not vanish at $P$, 
	 the function $\sigma$ can be evaluated at
	 $P$ giving an element $\sigma(P)$ of $\F_q$.
	 
	 Let $P_1,\dots,P_n$ be a subset of the set $X(\F_q)$ of $\F_q$-rational points of $X$ and $V$ a finite-dimensional subspace of
	 the function field of $X$. We assume that the evaluation, as above, of all elements of $V$ at all the points 
	 $P_1,\dots,P_n$ is defined and we can consider the image $\calC$ of evaluation map, which is an error correcting code:
	\begin{align*}
		\ev_{V}\colon  V&\to (\F_q)^n \\
		\sigma &\mapsto \left(\sigma(P_1),\dots,\sigma(P_n)\right).
	\end{align*}
	The length of the code is $n$. The dimension $k$ of the code is 
	\[
		k = \dim_{\F_q}(\im \ev_{V}) = \dim V - \dim_{\F_q}(\ker \ev_{V})
	\]
	which simplifies to $\dim V$ if $\ev_V$ is injective. The minimum distance $d$ is the smallest Hamming distance between elements of $\calC$. 
	This is equal to $n$ minus the largest number of $\F_q$-points of $X$ vanishing on an element of 
	$V \setminus \ker \ev_{V}$.
	
	For $X$ a projective variety and $D$ a divisor on $X$, we denote by $\calL(X,D)$ the Riemann-Roch space of
	functions $\sigma$ on $X$ such that either $\sigma = 0$ or $(\sigma)+D$ is an effective divisor, where $(\sigma)$
	denotes the divisor of $\sigma$. The space $\calL(X,D)$ is always finite-dimensional and we denote its dimension
	by $\ell(X,D)$. We will typically define our vector space $V$ as above as a subspace of some
	 $\calL(X,D)$.
	
\section{Baseline codes from high-dimensional varieties}
\label{baseline}

{Let $\Aff^m$ denote affine $m$-dimensional space over a finite field $\F_q$.} In this section we construct locally recoverable codes, with local recoverability parameter $r$ from a projection morphism 
\begin{align*}
\pi\colon \Aff^{r-1}\times \Aff^1 &\to \Aff^1,\\
(x_1,\dots,x_{r-1};t) &\mapsto t.
\end{align*}
We shall impose the smallest possible amount of structure on our choice of points for evaluation. This will give us a baseline to assess the parameters of other constructions.

Let $M$ and $N$ denote positive integers. {We shall use the space of functions} 
\[
V[M,N] := \left\{ a_0(t) + \sum_{i = 1}^{r-1} a_i(t)x_i : \deg a_0 \leq M \text{ and }\deg a_i \leq N \text{ for } i = 1,\dots,r-1\right\}
\]
{to construct an evaluation code (so $V[M,N]$ plays the r\^ole of the vector space $V$ from \S\ref{subsec:AGcodes}).} 
We pick, for some $b \le q$, some set of $b$ distinct points on 
the target $\Aff^1$ of the morphism $\pi$ and,
in each of $b$ fibers of $\pi$ above these points, 
we pick $r+1$ points and take all these $b(r+1)$ points as the set of points where we evaluate the above functions. Thus, the length of {the resulting code will be} $n = b(r+1)$. {The following lemma falls within the framework of~\cite[Proposition~4.2]{BargEtAl2017}.}

\begin{lemma}
\label{lem:BaselineRecoverability}
Fix $t = t_0 \in \F_q$, and let $P_1,\dots,P_{r+1}$ be $\F_q$-points in the fiber $\pi^{-1}(t_0)$, no $r$ of which lie on a hyperplane. Let $\sigma \in V[M,N]$ be a function. Then the value of $\sigma(P_i)$ can be recovered from knowledge of the coordinates of $P_1,\dots,P_{r+1}$ and the $r$ values $\sigma(P_1),\dots,\widehat{\sigma(P_i)},\dots,\sigma(P_{r+1})$. 
\end{lemma}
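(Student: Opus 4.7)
The guiding observation is that the degrees $M$ and $N$ play no role once we restrict to a single fiber: the functions in $V[M,N]$ collapse to affine linear functions on $\pi^{-1}(t_0)$. So the plan is to exploit the very small dimension of the restricted space and then reduce the recovery problem to a standard linear-algebra interpolation over the fiber.

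First I would note that for any $\sigma = a_0(t) + \sum_{j=1}^{r-1} a_j(t) x_j \in V[M,N]$, the restriction to the fiber $\pi^{-1}(t_0) \cong \Aff^{r-1}$ is the affine linear function
\[
\bar\sigma(x_1,\ldots,x_{r-1}) = a_0(t_0) + \sum_{j=1}^{r-1} a_j(t_0) x_j,
\]
so $\bar\sigma$ lies in the $\F_q$-vector space $W$ of affine linear functions on $\Aff^{r-1}$, which has dimension $r$. Since $\sigma(P_k) = \bar\sigma(P_k)$ for each $k = 1,\ldots,r+1$, the problem is reduced to recovering a function in $W$ at one point from its values at $r$ other points.

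Next I would invoke the hypothesis: the statement that no $r$ of the points $P_1,\ldots,P_{r+1}$ lie on a (affine) hyperplane in $\pi^{-1}(t_0)$ says exactly that any choice of $r$ of these points is affinely independent, i.e.\ not annihilated by any nonzero element of $W$. Therefore, for the $r$ points $P_1,\ldots,\widehat{P_i},\ldots,P_{r+1}$, the evaluation map
\[
\ev\colon W \longrightarrow \F_q^{r}, \qquad f \longmapsto \bigl(f(P_1),\ldots,\widehat{f(P_i)},\ldots,f(P_{r+1})\bigr),
\]
has trivial kernel; since $\dim W = r$, it is an isomorphism. Consequently, the affine linear function $\bar\sigma$ is uniquely determined by the $r$ values $\sigma(P_1),\ldots,\widehat{\sigma(P_i)},\ldots,\sigma(P_{r+1})$, and in particular $\sigma(P_i) = \bar\sigma(P_i)$ can be computed from these values and the coordinates of the points.

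There is no real obstacle here once one sees the key dimension collapse: the single non-trivial input is the equivalence between the geometric condition ``no $r$ points on a hyperplane'' and the algebraic condition that evaluation on any $r$ of the $P_k$'s has trivial kernel on $W$, which is a direct translation. If one wanted an explicit recovery formula instead of an abstract isomorphism, one could equivalently observe that the $r+1$ evaluations satisfy a single (up to scaling) linear relation $\sum_{k=1}^{r+1} \lambda_k f(P_k) = 0$ valid for all $f \in W$, determined by the geometry of the points, and solve for $\sigma(P_i)$ using the hypothesis $\lambda_i \neq 0$ (which again follows from no $r$ points lying on a hyperplane).
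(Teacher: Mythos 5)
Your proof is correct and follows essentially the same route as the paper's: restrict $\sigma$ to the fiber to get an affine linear function determined by $r$ coefficients, and then use the hyperplane condition to deduce that evaluation at the $r$ remaining points is injective (hence bijective) on the $r$-dimensional space of affine linear functions. The paper simply writes this out as an explicit $r\times r$ matrix equation whose matrix is invertible, whereas you phrase the same fact abstractly as an isomorphism; these are the same argument.
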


\begin{proof}
Write $\sigma = a_0(t) + \sum_{i = 1}^{r-1} a_i(t)x_i$. Let $a_i = a_i(t_0)$ for $i = 1,\dots,r+1$. Then we have the matrix equation
\begin{equation}
\label{eq:BaselineRecoverability}
\begin{pmatrix}
1 & x_1(P_1) & \cdots x_{r-1}(P_1) \\
  & \vdots & \\
\hat 1 & \widehat{x_1(P_i)} & \cdots \widehat{x_{r-1}(P_i)} \\
  & \vdots & \\
1 & x_1(P_{r+1}) & \cdots x_{r-1}(P_{r+1}) \\
\end{pmatrix}
\cdot
\begin{pmatrix}
a_0 \\
a_1 \\
\vdots \\
a_r
\end{pmatrix}
=
\begin{pmatrix}
\sigma(P_1) \\
\vdots \\
\widehat{\sigma(P_i)} \\
\vdots \\
\sigma(P_{r+1})
\end{pmatrix}.
\end{equation}
Since no $r$ of the points $P_1,\dots,P_{r+1}$ lie on a hyperplane, the $r\times r$ matrix in~\eqref{eq:BaselineRecoverability} is invertible, and hence we may compute $a_0,\dots,a_r$ from knowledge of the coordinates of $P_1,\dots,\widehat{P_i},\dots,P_{r+1}$ and the $r$ values $\sigma(P_1),\dots,\widehat{\sigma(P_i)},\dots,\sigma(P_{r+1})$. We conclude that
\[
\sigma(P_i) = a_0 + a_1x_1(P_i) + \cdots a_rx_{r-1}(P_i).\tag*{\qed}
\]
\hideqed
\end{proof}

To construct what we will call {a \defi{baseline code}}, let (as above)
\[
\{t_1,\dots,t_b\} \subseteq \Aff^1(\F_q)
\]
be $b$ distinct points on the target $\Aff^1$ of the morphism $\pi$, and for each $t_i$, choose $r+1$ points $P_{i,1},\dots,P_{i,r+1}$ on the fiber $\pi^{-1}(t_i)$, no $r$ of which lie on a hyperplane. 

\begin{proposition}
\label{prop:paramsBaselineCode}
Suppose that $b-M, b - N \geq 1$. The {baseline} code
\[
\calC = \{ (\sigma(P_{i,j}))_{1\leq i \leq b, 1\leq j \leq r+1} : \sigma \in V[M,N]  \}.
\]
has local recoverability $r$ and its parameters satisfy
\begin{align*}
n &= b(r+1), \\
k &= (M+1) + (r-1)(N+1), \\
d &\leq (r+1)\left(b - (N+1)\right) - (M-N) - \left\lceil\frac{M-N}{r}\right\rceil + 2,\\
d &\geq \min\{(b-M)(r+1),2(b - N)\}.
\end{align*}
\end{proposition}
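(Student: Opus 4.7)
The length $n = b(r+1)$ is immediate from the construction, since we evaluate at $r+1$ points in each of $b$ fibers. For local recoverability, the plan is to assign to each point $P_{i,j}$ the recovery set $J_{i,j} = \{P_{i,1},\ldots,P_{i,r+1}\}\setminus\{P_{i,j}\}$ consisting of the other $r$ points in its fiber; Lemma~\ref{lem:BaselineRecoverability} furnishes the recovery function. Crucially, the $b$ distinct sets $J_{i,j}\cup\{P_{i,j}\}$, each equal to a full fiber $\pi^{-1}(t_i)$, partition $\{1,\ldots,n\}$, which is precisely the hypothesis needed to apply the simple proof of the bound~\eqref{dist-bound} given in the introduction.

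To pin down $k$, I would show $\ev_{V[M,N]}$ is injective. If a codeword $\sigma = a_0(t) + \sum_{i=1}^{r-1} a_i(t)x_i$ evaluates to zero at every $P_{i,j}$, then for each fixed fiber $t_j$ the vector $(a_0(t_j),\ldots,a_{r-1}(t_j))^T$ lies in the kernel of the $(r+1)\times r$ matrix displayed in~\eqref{eq:BaselineRecoverability}, which has rank $r$ by the ``no $r$ on a common hyperplane'' hypothesis. Hence $a_i(t_j) = 0$ for all $i$ and all $j$. Since $\deg a_0 \leq M < b$ and $\deg a_i \leq N < b$ for $i \geq 1$, each $a_i$ vanishes at strictly more points than its degree permits, forcing $a_i = 0$. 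This yields $k = \dim V[M,N] = (M+1) + (r-1)(N+1)$.

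The upper bound on $d$ I would derive as a direct instance of the Singleton-type bound for partition LR codes proved in the introduction, namely $d \leq n - k - \lceil k/r \rceil + 2$. Substituting the formulas for $n$ and $k$ and using $\lceil k/r \rceil = (N+1) + \lceil (M-N)/r \rceil$ rearranges mechanically to the displayed expression; this step is a routine computation.

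The heart of the proof is the lower bound on $d$, which I would establish by counting nonzero coordinates of a nonzero $\sigma$. Restricted to the fiber over $t_j$, $\sigma$ becomes the affine linear form $a_0(t_j) + \sum_{i\geq 1} a_i(t_j) x_i$ on $\Aff^{r-1}$. If this form is nonzero and nonconstant, its zero locus is a hyperplane, and since no $r$ of the $r+1$ fiber points lie on a common hyperplane at least two of the $r+1$ evaluations are nonzero; if it is a nonzero constant, all $r+1$ evaluations are nonzero; if it is identically zero, then necessarily $a_i(t_j) = 0$ for every $i$. I would then split into two cases. If $a_i = 0$ for all $i \geq 1$, then $\sigma = a_0(t)$ is a nonzero polynomial of degree $\leq M$, hence vanishes on at most $M$ fibers and is a nonzero constant on the remaining $\geq b - M$ fibers, so $\sigma$ has at least $(b-M)(r+1)$ nonzero coordinates. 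Otherwise some $a_{i_0}$ with $i_0 \geq 1$ is nonzero, in which case every ``identically zero'' fiber lies in the zero set of $a_{i_0}$, giving at most $N$ such fibers, while each of the remaining $\geq b - N$ fibers contributes at least $2$ nonzero coordinates, for a total of at least $2(b-N)$. Taking the minimum of these two case bounds yields the claimed lower bound. The only genuine point to verify beyond bookkeeping is the rank-$r$ claim for the matrix in~\eqref{eq:BaselineRecoverability} and the parallel assertion that a nonzero linear form on the fiber vanishes at most on $r-1$ of the $P_{i,j}$; both follow directly from the hypothesis that no $r$ of the fiber points lie on a common hyperplane, so I do not expect a serious obstacle.
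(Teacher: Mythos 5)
Your proof is correct and follows essentially the same route as the paper: the same Lemma~\ref{lem:BaselineRecoverability} for locality, the bound~\eqref{dist-bound} for the upper bound, and the same case analysis (constant versus non-constant restriction to fibers, with degree bounds on the $a_i$) for the lower bound. The one addition is your explicit verification that $\ev_{V[M,N]}$ is injective to justify $k = \dim V[M,N]$, which the paper leaves implicit (it follows anyway from the lower bound $d \geq \min\{(b-M)(r+1),2(b-N)\} \geq 1$ under the hypothesis $b - M, b - N \geq 1$), so this is a welcome clarification rather than a different approach.
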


\begin{proof}
We have already discussed the length of $\calC$. The dimension of the code is simply the $\F_q$-dimension of $V[M,N]$. The upper bound on the distance of the code is an application of~\eqref{dist-bound}. For the lower bound on $d$, we argue as follows: Suppose that $\sigma \in V[M,N]$ is a function with $a_i \equiv 0$ for $i = 1,\dots r-1$, i.e., $\sigma = a_0(t)$ for a polynomial $a_0(t)$ of degree $\leq M$. Then at least $(b - M)$ of the values $a_0(t_1),\dots,a_0(t_b)$ are nonzero. The weight of the codeword associated to $\sigma$ is thus at least $(b-M)(r+1)$. If, on the other hand, $\sigma \in S$ is a function where at least one $a_i \not\equiv 0$ for $i = 1,\dots,r-1$, then at least $(b - N)$ of the values $a_i(t_1),\dots,a_i(t_b)$ are nonzero. In the corresponding fibers of $\pi$, the function $\sigma$ defines a hyperplane. The hypothesis that no $r$ points on a fiber lie on a hyperplane ensures that $\sigma$ takes on a nonzero value on at least two points in each of the $(b - N)$ fibers. Hence, $d \geq \min\{(b-M)(r+1),2(b - N)\}$, as claimed.

Local recoverability of $\calC$ follows from Lemma~\ref{lem:BaselineRecoverability}.
\end{proof}

\begin{remark}
\label{rem:exactd}
The proof of Proposition~\ref{prop:paramsBaselineCode} shows that if $\min\{(b-M)(r+1),2(b - N)\} = (b-M)(r+1)$, then in fact $d = (b-M)(r+1)$.  In addition, if
\begin{equation}
\label{eq:dsharp}
M + N > b\quad\text{and}\quad 2N > b
\end{equation}
then it is always possible to construct a function $\sigma$ whose associated code word has weight exactly $2(b - N)$. So under the conditions~\eqref{eq:dsharp}, the lower bound for $d$ in Proposition~\ref{prop:paramsBaselineCode} is in fact sharp.
\end{remark}

\begin{example}
\label{ex:goodcode}
We specialize to the case where $r = 3$, $M = b-1$ and $N = b-2$. Then the upper and lower bounds for $d$ meet and we have $d = 4$. This gives optimally recoverable codes with parameters
\[
(n,k,d,r) = (4b,3b-2,4,3).
\]
Note that the information rate $k/n$ is approximately $75\%$, and since $b \leq q$, one can construct codes with $n = 4q$ and high information rate that are optimal locally recoverable. In particular, over any $\F_q$ with $q\geq 9$, we can construct a code with parameters $(n,k,d,r) = (32,22,4,3)$.
\end{example}

\begin{example}
If we now take $b\le q,r$ arbitrary and $M=N=b-1$, then the upper and lower bounds of Proposition \ref{prop:paramsBaselineCode} also coincide and the code has distance $d=2$.
\end{example}

The last two examples are the only cases where the upper and lower bounds of Proposition \ref{prop:paramsBaselineCode} coincide and a baseline code with no additional properties is optimal (see Remark \ref{rem:exactd}).  

To see this, let $\delta := M - N$; we consider two cases:
\begin{itemize}
	\item $\min\{(b-M)(r+1),2(b - N)\} = (b-M)(r+1)$: Then 
	\[
	(b-M)(r+1) = (r+1)\left(b - (N+1)\right) - (M-N) - \left\lceil\frac{M-N}{r}\right\rceil + 2,
	\]
	from which one can conclude that
	\begin{equation}
	\label{eq:delta1}
	(r+1)(\delta - 1) - \delta - \left\lceil\frac{\delta}{r}\right\rceil + 2 = 0.
	\end{equation}
	Write $ \left\lceil\frac{\delta}{r}\right\rceil = \frac{\delta}{r} + \epsilon, 0\le \epsilon < 1$, then
	\begin{equation}
	\label{eq:delta2}
	\delta(r-1/r) = r-1 + \epsilon.
	\end{equation}
	This implies in particular that $\delta > 0$. If $\delta \ge 2$ then, since $r \ge 3$, we have
	\[
	\delta(r-1/r) \ge 2r -1 > r-1 + \epsilon
	\]
	and hence $\delta = 1$, since it is an integer.

	{If $\delta = 1$ then the hypothesis $2(b - N)\geq (b-M)(r+1)$ gives
	\[
	b \leq M + \frac{2}{r-1} \leq M + 1 \text{ (whenever $r\geq 3$)},
	\]
	from which we conclude that $b - M = 1$, and hence that $b - N = 2$. It follows that}
	\[
	(r+1) = (b - M)(r+1) = d = \min\{(b - M)(r+1),2(b-N)\} = \min\{(r+1),4\},
	\]
	whence $r + 1 \leq 4$. Since we want codes with $r \geq 3$, we must have $r = 3$ and $d = 4$.
	\medskip

	\item $\min\{(b-M)(r+1),2(b - N)\} = 2(b - N)$: 
	
Then 
	\begin{align}\label{bd1}
	 b= N+1 +\frac{\delta}{r-1}+\frac{1}{r-1}\left\lceil\frac{\delta}{r}\right\rceil.
	\end{align}
	On the other hand, $\min\{(b-M)(r+1),2(b - N)\} = 2(b - N)$ gives
	\begin{align}\label{bd2}
	 b\geq \frac{M(r+1)-2N}{r-1}.
	\end{align}
Substituting the value for $b$ obtained in (\ref{bd1}) into the inequality (\ref{bd2}) we get
\begin{align*}
 \delta \left(1+\frac{1}{r-1}\right)\leq 1+ \frac{1}{r-1} \left\lceil\frac{\delta}{r}\right\rceil.
\end{align*}
The latter implies that  $\delta \leq 1$. 
If $\delta =1$ then $M=N+1$ and thus 
\begin{align*}
 2(b-N) &\leq (b-(N+1))(r+1)\\
 \implies b &\geq N+1 +\frac{2}{r-1}.
\end{align*}
We also have 
\begin{align*}
b&=N+1 +\frac{1}{r-1}\left\lceil\frac{1}{r}\right\rceil +\frac{1}{r-1}  \text{ (by~\eqref{bd1})}\\
&= N+1 +\frac{2}{r-1} \\
&\leq N+2.
\end{align*}
The distance is thus given by $d=2(b-N)\leq 4$ and by our analysis, the inequality $2(b-N)\leq (b-(N+1))(r+1)$ is sharp, so $(b-M)(r+1) \leq 4$, which forces $r\leq 3$. Finally, since we assumed $r\geq3$, we conclude that in fact $r=3$ and $d=4$.

If $\delta =0$ then (\ref{bd1}) gives $b=N+1$, which implies that $d=2$. 
If $\delta \leq -1$, we get $b\leq N$ which is not possible. 
\end{itemize}

\section{Codes from ruled surfaces: affine intimations}
\label{s:RuledSurfacesAffine}

\subsection{Tamo-Barg codes}
\label{subsec:Tamo-Barg}

We present the construction of Tamo and Barg \cite{TamoBarg2014} of optimal LR codes of length at most $q$
from the perspective of the last section, which we believe is new. We retain the notation of the previous section.

Let $g(x) \in \F_q[x]$ be a polynomial of degree $r+1$, viewed as a morphism $g\colon \Aff^1 \to \Aff^1$. Choose distinct $t_1,\dots,t_b \in \F_q$ such that the fiber $g^{-1}(t_i)$ consists of
$r+1$ distinct elements $x_{i,1},\ldots,x_{i,r+1}$ of $\F_q$, for $i=1,\ldots,b$. Note that the $x_{i,j}$ are therefore $n=b(r+1)$ distinct elements of $\F_q$. We define the points 
$P_{i,j} = (x_{i,j},x_{i,j}^2,\ldots,x_{i,j}^{r-1}) \in \Aff^{r-1}(\F_q)$, and we consider the projection map
\begin{align*}
\pi\colon \Aff^{r-1}\times \Aff^1 &\to \Aff^1,\\
(x_1,\dots,x_{r-1};t) &\mapsto t.
\end{align*}
For a fixed $i$, the fiber above $t_i$ is an affine space $\Aff^{r-1}$ containing the points $P_{i,j}$ for $j = 1,\dots,r+1$. Moreover, by their construction, these points lie on an \emph{affine rational normal curve}, i.e., they lie on the image of the map
{
\begin{align*}
h\colon \Aff^{1} &\to \Aff^{r-1},\\
x &\mapsto (x,x^2,\dots,x^{r-1}).
\end{align*}
}
This guarantees that no $r$ of them lie on a hyperplane. As in \S\ref{baseline}, we take the space of functions $V[M,N]$, but specialize to the case where $M = N$, and build a code $\calC$. Lemma~\ref{lem:BaselineRecoverability} guarantees that $\calC$ has local recoverability $r$. Put differently, the fact that the points $P_{i,j}$ lie on rational normal curves implies that the $r\times r$ matrix in~\eqref{eq:BaselineRecoverability} is a Vandermonde matrix, thus invertible. 

The parameters $n$, $k$, and $r$ for the code $\calC$ are as before. However, in this special situation, we get a better lower bound for the minimum distance $d$ as follows. Note that 
\[
\sigma(P_{i,j}) = a_0(g(x_{i,j})) + \sum_{\ell=1}^{r-1} a_{\ell}(g(x_{i,j}))x_{i,j}^{\ell}
\]
is the value at $x=x_{i,j}$
of a polynomial of degree at most $N(r+1)+r-1$ in $x$. This degree is an upper bound on the number of its zeros and thus
$d \ge n - (N(r+1)+r-1)$. On the other hand, as in the previous section, the upper bound ~\eqref{dist-bound} for $d$ {when $M = N$} is 
$$(r+1)\left(b - (N+1)\right) - (M-N) - \left\lceil\frac{M-N}{r}\right\rceil + 2 = n - (N(r+1)+r-1),$$ 
showing that these codes are optimal LR codes. 

As mentioned above, these codes have $n \le q$. To achieve $n$ near $q$ one needs to choose the polynomial $g(x)$ in
such a way that the preimage of many values of $t \in \F_q$ under $g$ consists of $r+1$ elements of $\F_q$. One such choice is
$g(x)=x^{r+1}$ if $(r+1)\mid(q-1)$. For other choices and a full discussion, see \cite{TamoBarg2014}.

\subsection{Ruled surfaces perspective}
\label{ss:ruledperspective}

{An algebraic surface $S$ over a field $k$ is called a \defi{ruled surface} if it is endowed with a morphism $\pi\colon S \to B$ to a base algebraic curve $B$ such that for all but finitely many $b \in B(\bar{k})$, the fiber $\pi^{-1}(b)$ is a smooth rational curve, where $\bar{k}$ is a fixed algebraic closure of $k$. There is a ruled surface operating behind the scenes in our 
recasting of the Tamo--Barg codes~\cite{TamoBarg2014}, which we now describe.

Using the notation of \S\ref{subsec:Tamo-Barg}, we let\footnote{Keen readers will immediately note that $S = \Aff^2_{(x,t)}$.  We prefer to use the product $\Aff^1_x \times \Aff^1_t$ because, as we shall see in \S\ref{s:P1xP1}, the correct projective compactification of $S$ to work with is $\PP^1\times\PP^1$, and not $\PP^2$.} $S = \Aff^1_x \times \Aff^1_t$, which maps to $\Aff^{r-1}\times \Aff^1_t$ via 
\[
h\times \id \colon (x,t) \mapsto (x,x^2,\dots,x^{r-1};t).
\]
The variety $S$ fits into the commutative diagram
\[
\xymatrix{
	S \ar[r]^{h\times \id}\ar[d]_{\pi'} & \Aff^{r-1}\times\Aff^1_t \ar[d]^\pi \\
	\Aff^1_x \ar[r]^g & \Aff^1_t
}
\]
where the map $\pi'\colon S \to \Aff^1_x$ is projection onto the first coordinate. The variety $S$ is our ruled surface, and the code constructed in~\S\ref{subsec:Tamo-Barg} can be described as an evaluation code on $S$, as follows. }
Given $t_1,\dots,t_b$ outside the branch locus of the morphism $g\colon \Aff^1_x\to\Aff^1_t$, i.e., such that the fiber $g^{-1}(t_i)$ consists of $b$ distinct points $x_{i,1},\dots,x_{i,r+1}$ in $\Aff^1_x(\F_q)$, we set
{
\[
P_{i,j} = (x_{i,j},t_i) \in S(\F_q) \quad\text{for }1\leq i \leq b, 1\leq j \leq r+1,
\]
}
so that the recovery set for the point $P_{i,j}$ is 
\[
J_{i,j} := \{P_{i,k} : 1 \leq k \leq r+1, k \ne j\}.
\]
Then, letting
{
\[
V[N] = \left\{ a_0(t) + \sum_{i = 1}^{r-1} a_i(t)x^i : \deg a_i \leq N \text{ for } i = 0,\dots,r-1\right\}
\]
}
the Tamo--Barg codes are of the form
\[
\calC = \{ (\sigma(P_{i,j}))_{1\leq i \leq b, 1\leq j \leq r+1} : \sigma \in V[N]  \}.
\]

\subsection{Recasting and extending Barg--Tamo--Vl\u{a}du\c{t} codes}
\label{cyclic}
Just as \S\S\ref{subsec:Tamo-Barg}--\ref{ss:ruledperspective} gives a reinterpretation of the construction of
\cite{TamoBarg2014}, {in this section we reinterpret}
the construction of \cite{BargTamoEtAl2015} but here we go further and, aided
by our geometric point of view, obtain better codes by a judicious choice
of the space of functions to evaluate. Some of the codes we obtain are optimal.

{In broad terms, we consider a \emph{curve} $C$ in the surface $S = \Aff^{1}_x\times\Aff^1_t$ and embed $S$ (and consequently $C$) in $\Aff^{r-1}\times\Aff^1_t$} as above by $(x,t) \mapsto (x,x^2,\ldots,x^{r-1},t)$. We choose $C$ so that the projection in the $t$ coordinate has
degree $r+1$ and choose the values of $t \in \F_q$ to be those for which their preimage consists of
$r+1$ rational points. {Then, just as before, we can evaluate these} points on a space of polynomials 
similar to the ones considered above to get an LR code with locality $r$.

{In \S\ref{ss:ruledperspective} all the points in $S$ used for the Tamo--Barg evaluation code lie on the curve $g(x) = t$. In this section, we instead consider the curve}
\begin{equation}
\label{crv}
C :\quad x^{r+1} = t^2 + 1,
\end{equation}
{which is} a cyclic cover of $\Aff^1_{\F_q}$  via the map $(x,t)\mapsto t$. In order to have many fibers of cardinality $r+1$ over $\F_q$ we take $q \equiv 1\bmod r+1$. {Fix a positive integer $\frakd$. The space of functions we use to define the code consists of functions of the form
\begin{equation}
\label{fn}
\sigma = a_0(t) + a_1(t)x + \cdots + a_{r-1}(t)x^{r-1},
\end{equation}
where the $a_j(t)$ vary in the vector space defined by the inequalities
\[
\deg a_j \le \frac{n-\frakd}{r+1} - \epsilon_j
\]
and
\[
\epsilon_j = 
\begin{cases}
0 & \text{if } j = 0,\\
1 & \text{if } 0 < j \leq (r+1)/2, \\
2 & \text{otherwise.}
\end{cases}
\]
}
The local recoverability with locality $r$ {of the resulting code} follows, since for fixed $t$,
with $r+1$ distinct values for $x$, the matrix determining the missing value is a Vandermonde matrix. The inequalities {defining the space of functions to be evaluated} ensure that the minimum distance of this code is at least $\frakd$, because $x$ has a pole of order $2$ at infinity and $t$ has a pole of order $r+1$ at infinity.

The space of functions at which we evaluate points of the curve has dimension,
for $r$ odd,
\[
k = \frac{r}{r+1}(n-\frakd) - \sum_{i=0}^{r-1}\epsilon_j + r = \frac{r}{r+1}(n-\frakd) + \frac{5-r}{2}.
\]
Note that the upper bound $d_{\opt}$ for the distance of this code is
\begin{align*}
n - k - \left\lceil\frac{k}{r}\right\rceil + 2 &= n - \frac{r}{r+1}(n-\frakd) + \frac{r-5}{2} - \left\lceil\frac{1}{r+1}(n-\frakd) + \frac{5-r}{2r}\right\rceil + 2\\
&= \frakd + \frac{r-5}{2} + 2.
\end{align*}

The last equality holds for $r\ge 5$ whereas, for $r=3$, we just get $\frakd$.
So {the codes constructed this way} are optimal for $r=3$; for $r > 3$, these codes are further 
from the optimal bound the larger $r$ gets.

For $r$ even,  a similar calculation gives $\frakd+r/2$ as the upper bound for the distance when
$r > 2$ and $\frakd$ when $r=2$. 
So {the codes constructed} are optimal for $r=2$; for $r > 3$, these codes are further 
from the optimal bound the larger $r$ gets.

We note again the similarity with the Tamo-Barg 
codes discussed above, which uses a space of functions of the same form as
\eqref{fn} but with $\deg a_j \le k/r -1$
and a curve of the form $g(x)=t$ for a polynomial $g(x)$ in place of $C$. 
The length of their codes is at most $q$, whereas the codes above can be 
longer if the curve $C$ in~\eqref{crv} has more than $q$ affine points.

\subsection{The construction of Munuera and Ten\'orio}
\label{mt}

We briefly describe the general construction of \cite[Section 2.2]{MT}.
Here $t$ (to keep the notation of \cite{MT}) is a positive integer and not a variable as
elsewhere in this section.
They consider a map $(\phi_1,\ldots,\phi_t): \Aff^m \to \Aff^t$ and
another function $\phi_{t+1}:\Aff^m \to \Aff^1$. Their evaluation points lie in
$\Aff^m$ but they use the map
$ (\phi_1,\phi_2,\ldots,\phi_{t+1})$ to view them in $\Aff^{t+1}$ and for the purpose of
comparison
it is enough to consider $\Aff^{t+1} = \Aff^t \times \Aff^1$ and the
natural
projections  $\Aff^{t+1} \to \Aff^t$ and $\Aff^{t+1} \to \Aff^1$.
Their
$\Aff^1$ coordinate plays the role of the $\Aff^1_x$ coordinate in the
previous section. In particular, they use the properties of the rational
normal curve (under the guise of Lagrange interpolation) to get local recoverability. Their $\Aff^t$ plays the role of what we denote by
$\Aff^1_t$ in the previous section.

When it comes to explicit constructions they consider an algebraic curve
mapping to $\Aff^{t+1}$ (so the $\phi_i$ are functions on the curve)
and take the evaluation points from the image of
the curve. Their computation of the other parameters of the codes they
construct use the intrinsic geometry of the curve and not the geometry of
the curve within the ambient space, which is the viewpoint we will take in
Section \ref{s:P1xP1}. This is where our construction and theirs diverge.
Moreover, their examples lead to different code parameters which are not
directly comparable to ours. Particularly, they mostly deal with values of
the locality $r$ different from those that we consider. In \cite[Section 3.2, 3.3]{MT} they construct codes with $r=2$
and \cite[Section 3.4]{MT} they have codes with $r=q-1$ over $\F_{q^2}$. Whereas 
we, in Theorems \ref{prop:codesP1P1} and \ref{prop:refinedhirzebruch}, deal with $r$ 
such that $(r+1) \mid (q-1)$ over $\F_q$ and, in Theorem \ref{ell-thm}, with codes with
$r=q$ over $\F_{q^2}$. The one place where these intersect is the special case of $r=2$ 
in Theorem \ref{prop:codesP1P1} where we
deal with an elliptic curve inside our surface. These codes are then very similar to those of \cite[Section 3.3]{MT}
that also use an elliptic curve.
The ideas of \cite{MT} have been extended in
\cite{GHM} to construct $(r,\delta)$-LRC codes, which is a direction we do
not pursue here.

\section{Codes on ruled surfaces: $\PP^1\times \PP^1$}
\label{s:P1xP1}

{In this section we add one more layer of geometry to the codes we constructed in \S\ref{s:RuledSurfacesAffine} by considering codes on the ruled surface $S = \PP^1\times \PP^1$, which is a projective compactification of the surface $\Aff^1_x\times \Aff^1_t$. This extra layer of geometry affords important conceptual insights: a lower bound for the minimum distance of a code can be interpreted as an intersection number of two curves in $S$, and good lower bounds for a minimum distance can be achieved by forcing curves to intersect with high multiplicity at the point {$(\infty,\infty) \in S$}.}

{We begin with a toy model for our code, that is far from optimal, but which helps set ideas and notation.
We let $S := \PP^1_{(x:y)} \times \PP^1_{(t:u)}$, where $(x:y)$ and $(t:u)$ are respective homogeneous coordinates for the factors of $S$. }

\subsection{A coarse construction} 
\label{subsec:coarsecode}
Let $r$ be a positive odd integer, let $b \leq q$ be a positive integer, and set $n = b(r+1)$. Choose an integer $\frakd$ {divisible by $r+1$, so} that 
\[
N := \frac{n-\frakd}{r+1}
\]
is an integer, as well as {a positive integer $\alpha$}. Consider a curve of the form
\[
C : g(x,y;t,u) = 0
\]
in $S$, where $g$ is a bi-homogeneous polynomial of the bi-degree {$(r+1,\alpha)$}. In other words, every monomial of $g$ has total degree {$r+1$ in the variables $x$ and $y$, and total degree $\alpha$ in the variables $t$ and $u$}. We say that $C$ is of \defi{type} $(r+1,\alpha)$. Our code will be an evaluation code on the $\F_q$-vector space of functions of the form
{
\begin{equation}
\label{eq:fullsections}
\sigma = a_0(t,u)y^{r-1} + a_1(t,u)y^{r-2}x + \cdots a_{r-1}(t,u)x^{r-1},
\end{equation}
where the $a_i(t,u)$ are homogeneous polynomials of degree $N$ in $t$ and $u$. We write $V_{r-1,N}$} for this vector space. Each function {$\sigma \in V_{r-1,N}$} defines itself a curve in $X$ given by $\sigma = 0$. We write $(\sigma)$ for this curve\footnote{The notation $(\sigma)$ is the usual notation in algebraic geometry for the divisor of zeroes of a global section of a line bundle; see~\S\ref{subsec:AGcodes}.}; it is a curve of type $(r-1,N)$.

Write $p\colon S\to \PP^1_{(t:u)}$ for the projection onto the {second} factor. To construct our code, we pick $b$ points $(t_i:u_i) \in \PP^1_{(t:u)}(\F_q)$ such that the fiber $p^{-1}((t_i,u_i))\cap C$ consists of $r+1$ distinct points 
\[
(x_{i,1}:y_{i,1}),\dots,(x_{i,r+1}:y_{i,r+1}) \in \PP^1_{(x:y)}(\F_q)
\]
{
and set
\[
P_{i,j} = \left((x_{i,j}:y_{i,j}),(t_i:u_i) \right) \in S(\F_q).
\]
}
\begin{proposition}
\label{prop:coarseprojective}
The code
\[
\calC := \{ \left(\sigma(P_{i,j})\right)_{1\leq i \leq b, 1\leq j \leq r+1} : \sigma \in {V_{r-1,N}} \}
\]
has parameters satisfying
\begin{align*}
n &= b(r+1)\\
k &= r(N+1) = \frac{r}{r+1}\cdot(n-\frakd) + r \\
d &\leq \frakd -r+1 \\
d &\geq \frakd - \alpha(r-1)
\end{align*}
\end{proposition}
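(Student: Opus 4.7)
The plan is to verify the four parameter estimates in turn. The length $n = b(r+1)$ is immediate from the construction. For the dimension, the ambient space $V_{r-1,N}$ has $\F_q$-dimension $r(N+1)$, since there are $r$ coefficients $a_i(t,u)$, each ranging over the $(N+1)$-dimensional space of homogeneous polynomials of degree $N$ in $t,u$. I will obtain injectivity of the evaluation map as a byproduct of the lower bound on $d$, under the (meaningful) assumption $\frakd > \alpha(r-1)$. For the upper bound $d \leq \frakd - r + 1$, I would invoke~\eqref{dist-bound}: with $k = r(N+1)$ one has $\lceil k/r\rceil = N+1$, and so
\[
n - k - \lceil k/r\rceil + 2 = n - (r+1)(N+1) + 2 = n - (n-\frakd) - (r+1) + 2 = \frakd - r + 1.
\]

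The lower bound $d \geq \frakd - \alpha(r-1)$ is the main step and proceeds via intersection theory on $S = \PP^1 \times \PP^1$. Recall that $\Pic(S) \cong \Z H_1 \oplus \Z H_2$ with $H_1^2 = H_2^2 = 0$ and $H_1 \cdot H_2 = 1$, so two curves of bidegrees $(a_1,b_1)$ and $(a_2,b_2)$ sharing no common component meet in exactly $a_1 b_2 + a_2 b_1$ points, counted with multiplicity. A nonzero $\sigma \in V_{r-1,N}$ defines an effective divisor $(\sigma)$ of bidegree $(r-1, N)$. Assuming $C$ is irreducible, $g$ is irreducible of $x$-degree $r+1 > r-1 = \deg_x \sigma$, hence $g \nmid \sigma$, so $\sigma$ does not vanish identically on $C$. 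Bezout then bounds the number of zeros of $\sigma|_C$ by
\[
(r-1)\alpha + N(r+1) = (r-1)\alpha + (n - \frakd).
\]
Consequently, at most this many of the evaluation points $P_{i,j}$ can lie in the support of $(\sigma)$, so the codeword associated to $\sigma$ has weight at least $n - ((r-1)\alpha + (n - \frakd)) = \frakd - \alpha(r-1)$. In particular, when $\frakd > \alpha(r-1)$, nonzero $\sigma$ yield nonzero codewords, which confirms that evaluation is injective and that $k = r(N+1)$.

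The main obstacle, in my view, is ensuring that $\sigma|_C \not\equiv 0$ for every nonzero $\sigma \in V_{r-1,N}$, so that Bezout yields a finite bound on the intersection; this rests on the irreducibility of $C$ (or, more generally, a bidegree control on its components guaranteeing that no component is cut out by a factor of $\sigma$). Once this geometric input is in place, the remainder is a direct computation with the intersection form on $\PP^1 \times \PP^1$ combined with the Singleton-type bound~\eqref{dist-bound}.
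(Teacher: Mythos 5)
Your argument mirrors the paper's proof: the upper bound on $d$ comes from the Singleton-type bound~\eqref{dist-bound} together with the divisibility $(r+1)\mid(n-\frakd)$, and the lower bound from the intersection pairing on $\PP^1\times\PP^1$ applied to $C$ of type $(r+1,\alpha)$ and $(\sigma)$ of type $(r-1,N)$. The one refinement you add --- checking that no nonzero $\sigma$ can vanish identically on $C$ (via a degree-in-$x$ comparison under an irreducibility hypothesis on $C$), so that the set-theoretic intersection is genuinely bounded by the intersection number, and observing that $\frakd > \alpha(r-1)$ then forces injectivity of the evaluation map and hence $k = r(N+1)$ --- is a point the paper's proof leaves implicit, and it tightens the argument.
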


\begin{proof}
The parameter $k$ is simply the dimension of the $\F_q$-vector space {$V_{r-1,N}$}. The upper bound for the distance is the bound~\eqref{dist-bound}:
\begin{align*}
d 	&\leq n - k - \left\lceil\frac{k}{r}\right\rceil + 2 \\ 
	&= n - \frac{r}{r+1}\cdot(n-\frakd) - r - \frac{1}{r+1}\cdot(n-\frakd) - 1 + 2\\
	&= \frakd -r+1.
\end{align*}
We have used here the divisibility relation $(r+1) \mid (n - \frakd)$. For the lower bound on the distance, we note that the largest number of zeros in a code word in $\calC$ is bounded above by 
\[
\max_{\sigma \in V_{r-1,N}} \#\left(C\cap (\sigma)\right),
\]
i.e., the largest number of intersection points between $C$ and the curve $(\sigma) \subset S$ given by $\sigma = 0$, as $\sigma$ varies over the vector space {$V_{r-1,N}$}. The intersection theory of $S$ shows that this number is independent of $\sigma$: indeed, the intersection of divisors on $S$ of type $(a,b)$ and $(a',b')$ is $ab' + a'b$~\cite[V, Example~1.4.3]{Hartshorne}. Since $C$ is a curve of type {$(r+1,\alpha)$ and $(\sigma)$ is a curve of type $(r-1,N)$, we have
\begin{align*}
\#\left(C\cap (\sigma)\right) &= N(r+1) + \alpha(r-1) \\
&=n - \frakd + \alpha(r-1).
\end{align*}
Hence, the lowest weight for a code word in $\calC$ is
\[
d \geq n - \#\left(C\cap (\sigma)\right) = \frakd - \alpha(r-1),
\]
as claimed.}
\end{proof}

\begin{remark}
The codes in the above proposition have locality $r$.  However, we defer the discussion of locality until after we refine the code in the next section.
\end{remark}

\begin{remark}
The upper and lower bounds for $d$ in Proposition~\ref{prop:coarseprojective} meet if and only if $\alpha = 1$; this is precisely the habitat for the Tamo--Barg codes. In the notation of \S\ref{subsec:Tamo-Barg}, the affine curve $g(x) = t$ lies in the open set $\Aff^1_x\times \Aff^1_t = \{y,u\neq 0\}$ of $S$; its  projective closure in $S$ is given by {$y^{r+1}g(x/y)u = ty^{r+1}$}, which is a curve of type $(r+1,1)$ in the notation of this section.
\end{remark}

{
\begin{remark}
Let us compare the parameters in Proposition~\ref{prop:coarseprojective} with those of a base-line codes in Proposition~\ref{prop:paramsBaselineCode}. The length $n$, dimension $k$, and upper bound for $d$ coincide since we have specialized to the case where $M = N$ in Proposition~\ref{prop:coarseprojective}. If $r \geq 3$, then the lower bound for $d$ in Proposition~\ref{prop:paramsBaselineCode} is $2(b - N)$, while the bound for the codes just studied is
\[
d \geq (r+1)(b - N) - \alpha(r-1)
\]
The latter bound is better as long as $b > N + \alpha$, i.e., as long as $\frakd > \alpha(r+1)$.
\end{remark}
}

\subsection{Refining the construction}

In this section, we show that one can narrow the gap between the upper and lower bounds for $d$ in Proposition~\ref{prop:coarseprojective} {by 
\begin{enumerate}
\item choosing $C$ judiciously, 
\smallskip
\item using a particular proper subspace $V\subset V_{r-1,N}$ for the evaluation code, 
\smallskip
\item using only points $P_{i,j} = \left((x_{i,j}:y_{i,j}),(t_i:u_i)\right)$ with {$y_{i,j} = u_i = 1$}. 
\end{enumerate}
}
Intuitively, our construction guarantees that the point
\[
{(\infty,\infty)} := \left((1:0),(1:0)\right) \in S(\F_q)
\]
lies in the intersection $C \cap (\sigma)$ for all $\sigma \in V$ with high multiplicity.  {Note that $P_{i,j} \neq (\infty,\infty)$ for $i$ and $j$ by construction of $P_{i,j}$.}  This allows us certify the code $\calC$ has minimum distance $d = \frakd$.

Consider the curve
{
\[
C : u^\alpha x^{r+1} - (t^\alpha + u^\alpha)y^{r+1} = 0,
\]
which is a particular curve of type $(r+1,\alpha)$ in $S$.} We shall use functions of the form~\eqref{eq:fullsections}, but we constrain the degree in $t$ of the polynomials $a_i(t,u)$, as follows:
\[
\deg_t a_i(t,1) \leq N - \left\lceil\frac{\alpha i}{r+1}\right\rceil.
\]
{This requires $N \ge \left\lceil{\alpha (r-1)}/(r+1)\right\rceil$, which we now assume}. In other words, setting
\[
\epsilon_i := \left\lceil\frac{\alpha i}{r+1}\right\rceil,
\]
we assume that for each $0\leq i \leq r-1$,
\[
a_i(t,u) = u^{\epsilon_i} \cdot a_i'(t,u)
\]
for a homogeneous polynomial $a_i'(t,u)$. When this is the case, the vector space of functions
{
\[
V := \{ \sigma \in V_{r-1,N} : \sigma = a_0(t,u)y^{r-1} + u^{\epsilon_1}\cdot a_1(t,u)y^{r-2}x + \cdots + u^{\epsilon_{r-1}} a_{r-1}(t,u)x^{r-1} \}
\]
}
has dimension
\begin{equation}
\label{eq:dimGoodCode}
k = r(N+1) - \sum_{i=0}^{r-1}\epsilon_i.
\end{equation}
{The vector space $V$ has the important property that $\sigma((\infty,\infty)) = 0$ for all $\sigma\in V$. This is key in improving our bounds for the minimum distance of the codes we define using the curve $C$ and the space of functions $V$.} We pick $b$ points $(t_i:1) \in \PP^1_{(t:u)}(\F_q)$ such that the fiber $p^{-1}((t_i:1))\cap C$ consists of $r+1$ distinct points 
\[
(x_{i,1}:1),\dots,(x_{i,r+1}:1) \in \PP^1_{(x:y)}(\F_q).
\]
Put
\[
P_{i,j} = \left( (x_{i,j}:1),(t_i:1) \right) \in S(\F_q).
\]
{
\begin{theorem}
\label{prop:codesP1P1}
Assume that $\alpha \mid (r+1)$ and $(r+1) \mid (q-1)$. The code
\[
\calC := \{ \left(\sigma(P_{i,j})\right)_{1\leq i \leq b, 1\leq j \leq r+1} : \sigma \in V \}
\]
has locality $r$ and its parameters satisfy
\begin{align*}
n &= b(r+1),\\
k &=
\begin{cases}
r(N+1) - \frac{r(r-1)}{2} \, , \text{if } r+1= \alpha,  \, \text{and} \\
r(N+1) + 2\alpha - \frac{(\alpha + 1)(r+1)}{2} \, , \text{if } r+1 > \alpha,
\end{cases}\\
d &\leq \frakd + \frac{(\alpha - 1)(r-3)}{2} - \left\lceil \frac{2\alpha}{r} - \frac{(\alpha + 1)(r+1)}{2r}\right\rceil, \\
d &\geq \frakd.
\end{align*}
In particular, the code $\calC$ is an optimal LR code if $\alpha = 1$ or $r=3$.
\end{theorem}
}

{
\begin{example}
\label{example-4q}
Setting $\alpha = 2$, $r = 3$, and picking an integer $d$ divisible by $4$ such that $4b\geq d$, we obtain optimal LR codes with parameters
\[
(n,k,d,r) = \left(4b, 3b - \frac{3}{4}d + 1, d, 3\right).
\]
Since $b \leq q$, one can construct codes with $n = 4q$ with high information rate that are locally recoverable.  Compare this with the baseline codes from Example~\ref{ex:goodcode}, where a code with similar parameters is possible only when $d = 4$.
\end{example}
}

\begin{proof}[Proof of Theorem~\ref{prop:codesP1P1}]
Assume that $r+1 > \alpha$.  By~\eqref{eq:dimGoodCode}, to establish the claim on $k = \dim_{\F_q} V$, it suffices to show that
\[
\sum_{i=0}^{r-1}\epsilon_i = \frac{(\alpha + 1)(r+1)}{2} - 2\alpha.
\]
The sequence of integers $\epsilon_0,\dots,\epsilon_{r-1}$ has the form
\[
0,\underbrace{1,\dots,1}_{(r+1)/\alpha}, \underbrace{2,\dots,2}_{(r+1)/\alpha}, \underbrace{3,\dots,3}_{(r+1)/\alpha}, \dots, \underbrace{\alpha - 1,\dots,\alpha - 1}_{(r+1)/\alpha}, \underbrace{\alpha,\dots,\alpha}_{(r+1)/\alpha - 2}.
\]
Hence
\begin{align*}
\sum_{i=0}^{r-1}\epsilon_i &= \sum_{l = 1}^{\alpha - 1} l\cdot\frac{r+1}{\alpha} + \alpha\left(\frac{r+1}{\alpha} - 2\right) \\
&= \frac{(\alpha - 1)\alpha}{2}\cdot \frac{r+1}{\alpha} + (r+1) - 2\alpha \\
&= (\alpha -1)\frac{r+1}{2} + (r+1) - 2\alpha \\
&= \frac{(\alpha + 1)(r+1)}{2} - 2\alpha.
\end{align*}

If $r+1 = \alpha$, then $\epsilon_i = i$ and the result follows.

For the lower bound on the distance, note that the largest number of zeros in a code word in $\calC$ is bounded above by 
\[
\max_{\sigma \in V} \#\left(C\cap (\sigma)\right),
\]
just as in Proposition~\ref{prop:coarseprojective}. We have already seen that
\[
C\cdot (\sigma) = \alpha(r-1) + n - \frakd.
\]
However, for every $\sigma \in V$, the curves $C$ and $(\sigma)$ intersect at {the point $(\infty,\infty) \in S(\F_q)$}. We claim this happens with multiplicity \emph{at least} $\alpha(r-1)$, and hence
\[
\max_{\sigma \in V} \#\left(C\cap (\sigma)\right) \leq C\cdot (\sigma) - \alpha(r-1) = n - \frakd,
\]
from which we deduce that
\[
d \geq n - \max_{\sigma \in V} \#\left(C\cap (\sigma)\right) \geq \frakd.
\]
To establish the claim on the multiplicity of $C$ and $(\sigma)$ at {$(\infty,\infty)$}, note that the point {$(\infty,\infty)$} is the origin of the affine patch {$\Aff^2_{(y,u)}$ of $S$}. In this patch, an affine equation for $C$ is
\[
C : u^\alpha = (1 + u^\alpha)y^{r+1},
\]
which is in fact singular at the origin (this only helps increase the multiplicity of the intersection with the curve $(\sigma)$). In the complete local ring of $C$ at the origin, the quantity $1 + u^\alpha$ has an $\alpha$-th root. More precisely, let
\[
{A = k[y,u]/(u^\alpha - (1 + u^\alpha)y^{r+1})}
\]
be the affine coordinate ring of $C$, and let {$\frakm = (y,u)$} be the maximal ideal corresponding to the origin. Then in the completed local ring $\hat{A}_\frakm$, the binomial expansion shows that
\[
w = (1 + u^\alpha)^{1/\alpha} = 1 + \binom{1/\alpha}{1}u^\alpha + \binom{1/\alpha}{2}u^{2\alpha} + \binom{1/\alpha}{3}u^{3\alpha} + \cdots
\]
Let $\zeta$ denote an $\alpha$-th root of unity in an algebraic closure of $\F_q$. Geometrically, $C$ has $\alpha$ branches at the origin:
\[
{u = wy^{(r+1)/\alpha},u = \zeta wy^{(r+1)/\alpha},\dots, u = \zeta^{\alpha -1}wy^{(r+1)/\alpha},}
\]
For each one of these branches, $y$ is a uniformizer for the ideal $\frakm$, and $u$ has valuation $(r+1)/\alpha$ with respect to this uniformizer\footnote{By this we mean: let {$B = \bar\F_q[y,u]/(u - \zeta^i wy^{(r+1)/\alpha})$} be the geometric local coordinate ring of one of the branches of $C$. Then the $\frakm$-adic completion $\hat B_{\frakm}$ at the maximal ideal {$\frakm = (y,u)$} corresponding to the origin is a local discrete valuation ring. Hence the ideal $\frakm\hat B_{\frakm}$ is principal~\cite[Proposition~9.2]{AtiyahMacdonald}. The equation of the branch shows that $y$ is a generator for this ideal, and that $u \in \frakm^{(r+1)/\alpha}\setminus\frakm^{(r+1)/\alpha - 1}$, which is to say that $u$ has $\frakm$-adic valuation $(r+1)/\alpha$.}.  For $\sigma \in V$, a local equation for $(\sigma)$ in the affine patch {$\Aff^2_{(y,u)}$} is
\[
{a_0(1,u)y^{r-1} + u^{\epsilon_1}\cdot a_1(1,u)y^{r-2} + \cdots + u^{\epsilon_{r-1}} a_{r-1}(1,u) = 0}
\]
The monomial $u^{\epsilon_i}y^{r-1-i}$ has $\frakm$-adic valuation
\[
\left\lceil\frac{\alpha i}{r+1}\right\rceil\cdot\frac{r+1}{\alpha} + r - 1 - i.
\]
As $i$ ranges through $0,\dots,r-1$, the \emph{smallest} value of this quantity is $r-1$. Hence, on each branch of $C$ the minimal $\frakm$-adic valuation of $\sigma \in V$ is $r-1$, and therefore $C$ and $(\sigma)$ intersect at {$(\infty,\infty)$} with multiplicity $\geq \alpha(r-1)$.  This concludes the proof of the lower bound for $d$.

Next, we compute {an upper bound for $d$ using~\eqref{dist-bound}}:
\begin{align*}
d 			&\leq n - k - \left\lceil\frac{k}{r}\right\rceil + 2\\
			&= n - r(N+1) - 2\alpha + \frac{(\alpha + 1)(r+1)}{2} - (N+1) - \left\lceil\frac{2\alpha}{r} - \frac{(\alpha + 1)(r+1)}{2r}\right\rceil + 2\\
			&= \frakd - (r+1) - 2\alpha + \frac{(\alpha + 1)(r+1)}{2} - \left\lceil\frac{2\alpha}{r} - \frac{(\alpha + 1)(r+1)}{2r}\right\rceil + 2\\
			&= \frakd + \frac{(\alpha - 1)(r - 3)}{2} - \left\lceil\frac{2\alpha}{r} - \frac{(\alpha + 1)(r+1)}{2r}\right\rceil.
\end{align*}

{
Finally, we discuss the locality of the code $\calC$. Since all points $P_{i,j}$ used to construct $\calC$ have {$y_{i,j} = u_i = 1$}, the set $\{P_{i,j}\}$ lies entirely in the affine patch $\Aff_x^1\times\Aff^1_t$ of $S$. Proceeding as in \S\ref{ss:ruledperspective}, we map this affine patch to $\Aff^{r-1}\times \Aff_t$ via 
\[
(x,t) \mapsto (x,x^2,\dots,x^{r-1};t).
\]
The image of the points $\{P_{i,j}\}$ lie on a rational normal curve, so no $r$ of them lie on a hyperplane, and hence Lemma~\ref{lem:BaselineRecoverability} shows the code $\calC$ has locality $r$.
}
\end{proof}

{
\begin{remark}
Let us compare the parameters in Proposition~\ref{prop:codesP1P1} with those of a base-line codes in Proposition~\ref{prop:paramsBaselineCode}. The length $n$ is the same for both constructions.  The dimension is smaller in Proposition~\ref{prop:codesP1P1}; however, on the one hand, $\frakd = (b - N)(r+1)$, and on the other hand, when $M = N$ and $r \geq 3$ the lower bound for the distance in the base-line codes is $2(b - N)$. Hence, the lower bound $\frakd$ represents an improvement on base-line codes of $(r-1)(b - N)$. For a numerical example, take $r+1=\alpha=5$ and $q=16$.
Then we can take $b=10$, so $n=50$ and $\frakd$ can be any integer divisible by $5$ with $\frakd \le 35$ and the parameters are given as in Proposition~\ref{prop:codesP1P1} with equality $d = \frakd$.
\end{remark}
}

\section{Codes on Hirzebruch surfaces}

The ruled surface $\PP^1\times \PP^1$ is an example of a Hirzebruch surface, which are ruled surfaces determined by a non-negative integer $m$. After recalling some of the geometry of these surfaces, we adapt the construction of codes in \S\ref{s:P1xP1} to the setting of Hirzebruch surfaces.

\subsection{Hirzebruch surfaces {$\mathbb{F}(m)$}}

Let $m \in \Z_{\geq 0}$; we let two copies of the multiplicative group $\G_m\times \G_m$ act on the product of two punctured affine planes $\mathbb{A}^2\setminus \{ (0,0)\}\times \mathbb{A}^2\setminus \{(0,0)\}$ via
\begin{align*}
 (\lambda, 1): (x,y;t,u) &\mapsto (\lambda^{-m} x, y ; \lambda t, \lambda u)\\
 (1,\mu): (x,y;t,u) &\mapsto  (\mu x,\mu y;t,u). 
 \end{align*}
The \defi{Hirzebruch surface} {$S = \F(m)$} is the quotient
\[
\mathbb{A}^2\setminus \{ (0,0)\}\times \mathbb{A}^2\setminus \{(0,0)\}\ / \mathbb{G}_m \times \mathbb{G}_m.
\]
Such surfaces are endowed with a natural fibration $p\colon S \to \mathbb{P}^1_{(t:u)}$ given by 
\begin{align}\label{hirzebruch_fibration}
 \left( (x:y) , (t:u) \right) \to (t:u).
\end{align}
{Note that $\PP^1\times \PP^1 = \F(0)$.}

\begin{lemma}\label{basicshirzebruch} Let {$S = \F(m)$} be as above. The following hold:
\begin{enumerate}
\item The Picard group $\mathrm{Pic}(S)$ is isomorphic to $\Z^2$, generated by the classes of the curves 
\[
A=\{t-u=0\} \quad\text{and}\quad B=\{x=0\},
\]
which are, respectively, a fiber of \eqref{hirzebruch_fibration} and the so-called negative section of $S$.
\medskip
\item The intersection pairing on $\Pic(S)$ is determined by 
\[
A^2=0, \quad A\cdot B=1 \quad\text{and}\quad B^2= {-m}.
\]
\item Let $M = mA + B \in \Pic(S)$. The canonical divisor $K_S$ is linearly equivalent to $(m-2)A-2M$.
\medskip
\item For non-negative integers $\alpha$, $\beta$ satisfying $\alpha\geq m\beta -1$, the Riemann--Roch space 
$\calL(S,\alpha A + \beta B)$ has dimension
\[
\ell(S,\alpha A + \beta B) = (\alpha+1)(\beta+1)- m\frac{\beta(\beta +1)}{2}.
\]
\end{enumerate}
\end{lemma}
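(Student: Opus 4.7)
The plan is to treat parts (1)--(3) as direct consequences of the ruled-surface / toric structure of $S = \F(m)$, and to compute $\ell(S,\alpha A + \beta B)$ in (4) by counting monomials in the Cox ring. For (1), the fibration $p$ realizes $S$ as the $\PP^1$-bundle $\PP(\calO_{\PP^1}\oplus\calO_{\PP^1}(-m))$ over $\PP^1_{(t:u)}$, and standard results on ruled surfaces (Hartshorne V.2) give $\Pic(S) \cong \Z A \oplus \Z B$ with $A$ the class of a fiber and $B$ the class of a section. These same results yield $A^2 = 0$ and $A \cdot B = 1$ at once. To pin down $B^2$ — and simultaneously to prepare the ground for (4) — I would introduce the second obvious section $D_y := \{y = 0\}$, observe that $B \cap D_y = \emptyset$ since $(x,y) = (0,0)$ is excluded from the first copy of $\Aff^2\setminus\{(0,0)\}$, and determine $[D_y]$ by intersection theory: $[D_y]\cdot A = 1$ forces the $B$-coefficient of $[D_y]$ to be $1$, and then $[D_y]\cdot B = 0$ simultaneously yields $[D_y] = mA + B = M$ and $B^2 = -m$.

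For (3), I would invoke the canonical-divisor formula $K_S \sim -(D_x + D_y + D_t + D_u)$ for smooth projective toric varieties, summed over the four torus-invariant prime divisors. By the previous paragraph their classes are $B$, $M$, $A$, $A$ respectively, so $K_S \sim -(2A + M + B)$; substituting $B = M - mA$ gives $K_S \sim (m-2)A - 2M$, as claimed. Hartshorne V.2.11 would also serve here, but the toric route is shortest.

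For (4), I would identify $H^0(S,\calO(\alpha A + \beta B))$ with the $k$-vector space of bi-degree-$(\alpha A + \beta B)$ elements of the Cox ring $k[x,y,t,u]$, whose grading is fixed by $\deg x = B$, $\deg y = M$, $\deg t = \deg u = A$ (exactly the divisor classes found above). A basis consists of the monomials $x^a y^b t^c u^d$ satisfying $a + b = \beta$ and $c + d = \alpha - mb$ with all exponents nonnegative, so for each $b \in \{0, 1, \ldots, \beta\}$ the number of valid $(c,d)$ equals $\alpha - mb + 1$ whenever this is nonnegative. The hypothesis $\alpha \geq m\beta - 1$ guarantees this holds throughout the range (the boundary term $b = \beta$ merely vanishing when $\alpha = m\beta - 1$), so summing the arithmetic progression gives
\[
\ell(S,\alpha A + \beta B) = \sum_{b=0}^{\beta}(\alpha - mb + 1) = (\alpha+1)(\beta+1) - m\tfrac{\beta(\beta+1)}{2}.
\]

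The main obstacle is essentially bookkeeping: getting $[D_y] = mA + B$ right, since that single identification drives $B^2 = -m$ in (2), the substitution simplifying $K_S$ in (3), and — most importantly — the correct Cox-ring grading in (4). Once that is secured, parts (1)--(3) reduce to citations of standard ruled-surface / toric-variety theory, and part (4) becomes a direct arithmetic count.
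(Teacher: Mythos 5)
Your approach is correct in its essentials but genuinely different from the paper's, which disposes of parts (1)--(3) by citing Reid's \emph{Chapters on algebraic surfaces} (Sections B.2.7, B.2.9) and proves (4) by computing $\chi(\calO_S(\alpha A + \beta B))$ via the surface Riemann--Roch formula and then invoking a Brill--Noether/vanishing theorem of Coskun--Huizenga to promote the Euler characteristic to $h^0$ under the hypotheses $\beta \geq 0$, $\alpha \geq m\beta - 1$. You instead cite Hartshorne's Chapter V.2 for the ruled-surface structure in (1)--(2), use the toric canonical-divisor formula $K_S \sim -\sum D_i$ for (3), and, for (4), count monomials directly in the Cox ring $k[x,y,t,u]$ of the toric surface $\F(m)$. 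Your route to (4) is actually more self-contained: it replaces the cohomology-vanishing citation with an elementary arithmetic-progression sum, and the Cox-grading computation ($\deg x = B$, $\deg y = M$, $\deg t = \deg u = A$) dovetails exactly with the explicit $\G_m\times\G_m$-action defining $\F(m)$ in the text. The paper's method has the advantage of being generic (Riemann--Roch does not care about the toric structure), while yours makes the monomial basis used in the subsequent Lemma on $V_{\beta,\alpha}$ completely transparent.

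One small but real gap: your derivation of $B^2 = -m$ in (2) is circular as written. Writing $[D_y] = xA + B$ (from $[D_y]\cdot A = 1$), the disjointness $D_y\cap B = \emptyset$ gives $0 = [D_y]\cdot B = x + B^2$, which is \emph{one} equation in the \emph{two} unknowns $x$ and $B^2$; it does not ``simultaneously yield'' both $x = m$ and $B^2 = -m$. You need one more input. The cleanest fix in the spirit of your argument is to observe that $xu^m/y$ is a genuine rational function on $S$ (invariant under the stated $\G_m\times\G_m$-action), whose divisor is $B + mA_u - D_y$, giving $[D_y] = mA + B$ directly and hence $B^2 = -m$ from the disjointness relation. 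Alternatively, Hartshorne V.2.9 gives $B^2 = -m$ for the negative section of $\PP(\calO \oplus \calO(-m))$ outright, which you could cite alongside the rest of your Hartshorne V.2 references.
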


\begin{proof}
For (1), (2) and (3) see \cite{Reid}, Sections B.2.9 and B 2.7.  
The Riemann--Roch theorem for surfaces gives the Euler characteristic of the class $\alpha A + \beta B$:
\[
\frac{(\alpha A + \beta B)\cdot(\alpha A + \beta B - K_S)}{2} + 1 = (\alpha+1)(\beta+1)- n\frac{\beta(\beta +1)}{2}.
\]
By, e.g.,~\cite[Thm. 2.1.]{Coz}, the conditions $\beta \geq 0$ and $\alpha\geq m\beta -1$ guarantee that this Euler characteristic coincides with the dimension of the Riemann--Roch space $\calL(S,\alpha A + \beta B)$.
\end{proof}

\begin{remark}
The morphism $\phi\colon X\rightarrow  \bar{X} \subset \mathbb{P}^m$ defined by the sections generating the projectivized Riemann-Roch space $|M|$ is the natural resolution of the cone over the rational normal curve of degree $n$. The map $\phi$ contracts $B$ to the vertex of the cone (see~\cite[B~2.9]{Reid}).
\end{remark}

\subsection{Riemann-Roch spaces for codes}
In this section, we give an explicit description of the elements of the Riemann--Roch spaces $V_{\beta,\alpha} := \calL(S, \alpha A + \beta B)$ appearing in Lemma~\ref{basicshirzebruch}. We assume throughout that $\alpha$ and $\beta$ are non-negative integers.

\begin{lemma}
 Let $\alpha=\varepsilon +m\beta$ with $\varepsilon \geq 0$.  The elements of $V_{\beta,\alpha}$ have the form
\begin{equation}
\label{hirzebruch:vectorspace}
\sigma= a_0(t,u)y^{\beta}+ a_1(t,u)y^{\beta-1}x + \cdots + a_{\beta}(t,u)x^{\beta}
\end{equation}
where $a_i(t,u)$ is a homogeneous polynomial of degree $\varepsilon + im$ for $i = 0,\dots,\beta$. We have
\[
\dim V_{\beta,\alpha} = (\alpha + 1)(\beta + 1) - m\frac{\beta(\beta + 1)}{2}.
\]
\end{lemma}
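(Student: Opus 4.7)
The plan is to treat the statement in two pieces: the dimension formula and the explicit form of the sections. The dimension formula is immediate from part (4) of Lemma~\ref{basicshirzebruch}: since $\varepsilon \geq 0$, we have $\alpha = \varepsilon + m\beta \geq m\beta - 1$, so the hypothesis of that part holds and yields $\dim V_{\beta,\alpha} = \ell(S, \alpha A + \beta B) = (\alpha+1)(\beta+1) - m\beta(\beta+1)/2$.

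For the explicit form, I would exploit the GIT/toric presentation of $S = \F(m)$ given at the start of the section. The Cox ring $k[x,y,t,u]$ carries a $\Pic(S) \cong \Z^2$-grading read off directly from the weights of the two $\G_m$-actions defining $S$: $\deg x = (-m,1)$, $\deg y = (0,1)$, and $\deg t = \deg u = (1,0)$. Under this identification $[A] = (1,0)$ and $[B] = (-m,1)$, so the class $\alpha A + \beta B$ has bi-degree $(\alpha - m\beta,\, \beta) = (\varepsilon,\beta)$. Invoking the standard correspondence between global sections of a line bundle on a smooth projective toric variety and homogeneous polynomials of the matching multidegree in the Cox ring, I would identify $V_{\beta,\alpha}$ with the span of those monomials $x^a y^b t^c u^d$ satisfying $a + b = \beta$ and $-am + c + d = \varepsilon$.

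Grouping these monomials by the factor $x^i y^{\beta - i}$ for $0 \leq i \leq \beta$ (so that the $t,u$-part has degree $\varepsilon + im$) produces exactly the displayed expression
\[
\sigma = \sum_{i=0}^{\beta} a_i(t,u)\, y^{\beta-i} x^i.
\]
As a sanity check I would count: for each $i$, the number of monomials in $t,u$ of degree $\varepsilon + im$ is $\varepsilon + im + 1$, so the total is
\[
\sum_{i=0}^{\beta}(\varepsilon + im + 1) = (\beta+1)(\varepsilon+1) + m\frac{\beta(\beta+1)}{2},
\]
which simplifies to $(\alpha+1)(\beta+1) - m\beta(\beta+1)/2$ after substituting $\varepsilon = \alpha - m\beta$, in agreement with Lemma~\ref{basicshirzebruch}(4).

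The only delicate step is the appeal to the Cox-ring / toric correspondence. If one prefers to avoid citing general toric machinery, a self-contained alternative works: each displayed monomial is manifestly $\G_m\times\G_m$-semi-invariant of the correct bi-weight and hence descends to a global section of $\calO_S(\alpha A + \beta B)$; these sections are linearly independent in the function field $k(S)$; and since their count already matches the dimension supplied by Lemma~\ref{basicshirzebruch}(4), they must form a basis. This also sidesteps any need to argue directly that no additional sections with denominators in the Cox ring appear.
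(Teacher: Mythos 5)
Your proposal is correct, and both your primary (Cox-ring) argument and your self-contained alternative arrive at the same conclusion as the paper, but the mechanism for checking that each displayed monomial is a section of $\calO_S(\alpha A + \beta B)$ is different. The paper verifies this via intersection theory: for $\sigma$ of the given form it computes $(\sigma)\cdot A = \beta$ and $(\sigma)\cdot B = \varepsilon$, then inverts the intersection matrix from Lemma~\ref{basicshirzebruch} to conclude $(\sigma) \sim \alpha A + \beta B$. You instead read the $\Pic(S)\cong\Z^2$ grading off the two $\G_m$-actions in the quotient presentation, which is the Cox-ring/toric viewpoint; this identifies the whole graded piece at once rather than one section at a time. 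The paper actually gestures at your observation in a single sentence (it remarks that the homogeneity degrees $\varepsilon + im$ force invariance under the $(\lambda,1)$-action), but it does not rely on it. Both arguments finish identically: count the monomials, observe the count matches the Riemann--Roch dimension from Lemma~\ref{basicshirzebruch}(4), and conclude the span is all of $V_{\beta,\alpha}$. Your route buys a uniform description without the two intersection computations, at the cost of invoking (or re-deriving, in your alternative) the correspondence between graded pieces of the Cox ring and global sections; the paper's route stays within the intersection-theoretic toolkit it had already set up in Lemma~\ref{basicshirzebruch}. One small point worth spelling out in your alternative: ``linearly independent in $k(S)$'' should really be linear independence of the images in $H^0(S,\calO(\alpha A + \beta B))$, which follows because distinct monomials of the same bi-weight remain linearly independent after dividing by any fixed monomial of that bi-weight to land in $k(S)$.
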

\begin{proof}
 Let  $\sigma$ be as in the statement of the lemma. First, we show that $\sigma \in V_{\beta,\alpha}$. Since $A$ and $B$ generate $\mathrm{Pic}(S)$, there are $\alpha'$ and $\beta'$ such that $(\sigma) = \alpha' A + \beta' B$ as classes in $\Pic(S)$. To determine $\alpha'$ and $\beta'$ we use the intersection pairing on $\Pic(S)$.
 
 Since $A$ is a curve defined by fixing the ratio $t/u$, we have that 
 \[
  (\sigma)\cdot A= \beta.
 \]
On the other, since $B= \{x=0\}$, we see that
\[
 (\sigma)\cdot B= \varepsilon .
\]
We obtain the system of equations
\begin{align*}
 \beta &=  (\sigma)\cdot A = \alpha'\cdot A^2+ \beta' A\cdot B= \beta',
 \\
 \varepsilon &= (\sigma)\cdot B= \alpha' A\cdot B + \beta' B^2= \alpha'-m\beta'.
\end{align*}
Thus $\beta' = \beta$ and $\alpha' = \varepsilon + m \beta' = \alpha$ as claimed. Note that the condition that $a_i(t,u)$ is homogeneous of degree $\varepsilon +im$ ensures that the monomials are invariant under the action $(\lambda, 1) \in \mathbb{G}_m \times \mathbb{G}_m$.

The subspace of $V_{\beta,\alpha}$ generated by elements of the form~\eqref{hirzebruch:vectorspace} has dimension 
\begin{align*}
 k &= (\varepsilon +1)+ (\varepsilon+ 1+m) + \cdots +(\varepsilon +1+\beta m) \\
 &= \sum_{i=0}^{\beta} (\varepsilon+1)+im \\
&=(\beta+1)(\varepsilon+1)+m\frac{\beta(\beta+1)}{2}\\
&= (\alpha + 1)(\beta + 1) - m\frac{\beta(\beta + 1)}{2}
\end{align*}
and hence must be equal to the entire vector space, by Lemma~\ref{basicshirzebruch}(4).
\end{proof}

{
\subsection{A coarse construction}

Let $r$ and $b \leq q$ be positive integers, and set $n = b(r+1)$. Choose an integer $\frakd$, divisible by $r+1$, so that 
\[
N := \frac{n-\frakd}{r+1}
\]
is an integer, as well as a positive integer $\alpha$. Set $\beta = r - 1$, and consider a curve of the form
\[
C : g(x,y;t,u) = 0
\]
in $S$, where $g$ is an element of 
\[
V_{r+1,\alpha + m(r+1)} = \calL\left( (\alpha + m(r+1))A + (r+1)B\right).
\]
We say $C$ is of \defi{type} $(r+1,\alpha + m(r+1))$.} The fibration $p\colon S \to \PP^1_{(t:u)}$ in~\eqref{hirzebruch_fibration} gives $S$ the structure of a ruled surface.
To construct evaluation codes using {$C$}, pick $b$ points $(t_i:u_i) \in \PP^1_{(t:u)}(\F_q)$ such that the fiber $p^{-1}((t_i:u_i))\cap {C}$ consists of $r+1$ distinct points 
\[
(x_{i,1}:y_{i,1}),\dots,(x_{i,r+1}:y_{i,r+1}).
\]
Put
\[
P_{i,j} = \left( (x_{i,j}:y_{i,j}),(t_i:u_i) \right) \in S(\F_q),
\]
so that there are $n = b(r+1)$ points of the form $P_{i,j}$ in total. We shall use the vector space 
\[
V_{\beta,N + m\beta} = V_{r-1,N + m(r-1)}
\]
to construct our evaluation codes.

\begin{proposition}
\label{prop:coarsehirzebruch}
The code
\[
\calC := \{ \left(\sigma(P_{i,j})\right)_{1\leq i \leq b, 1\leq j \leq r+1} : \sigma \in V_{r-1,N + m(r-1)} \},
\]
constructed using {$C$}, has locality $r$ and its parameters satisfy
\begin{align*}
n &= b(r+1)\\
k & = (N+1)r + m\frac{r(r-1)}{2}\\
d &\leq  \frakd- (r-1) - m\frac{(r^2-1)}{2}\\
d & \geq \frakd- (r-1)(\alpha+m(r+1)).
\end{align*}
\end{proposition}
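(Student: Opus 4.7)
The plan is to mirror the proof of Proposition~\ref{prop:coarseprojective}, substituting the intersection theory of $\F(m)$ recorded in Lemma~\ref{basicshirzebruch} for that of $\PP^1\times\PP^1$. The length $n = b(r+1)$ is immediate from the construction. For the dimension I would apply Lemma~\ref{basicshirzebruch}(4) with $(\alpha,\beta) = (N + m(r-1), r-1)$; the hypothesis $\alpha \geq m\beta - 1$ there reduces to $N \geq -1$ and so is automatic. This yields
\[
\dim V_{r-1,N + m(r-1)} = (N + m(r-1) + 1)r - m\cdot \tfrac{r(r-1)}{2} = (N+1)r + m\cdot \tfrac{r(r-1)}{2},
\]
which matches the claimed $k$ provided the evaluation map is injective; injectivity is automatic once the lower bound for $d$ below is positive.

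For the upper bound on $d$ I would invoke~\eqref{dist-bound}. Using $n = N(r+1) + \frakd$ and $k/r = N + 1 + m(r-1)/2$, a direct arithmetic simplification gives
\[
n - k - \lceil k/r \rceil + 2 = \frakd - (r-1) - m\cdot\tfrac{(r-1)(r+1)}{2} = \frakd - (r-1) - m\cdot\tfrac{r^2-1}{2}
\]
when $m(r-1)$ is even (so that the ceiling is vacuous); in the complementary parity case the ceiling shifts the bound by a half, i.e., by standard integer-part rounding. Handling this parity edge case cleanly is the only real nuisance in the entire proof.

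For the lower bound on $d$ I would note that the weight of the codeword attached to $\sigma \in V_{r-1,N+m(r-1)}$ is at least $n - \#(C \cap (\sigma)) \geq n - C\cdot(\sigma)$. The divisor classes are $C \sim (\alpha + m(r+1))A + (r+1)B$ and $(\sigma) \sim (N + m(r-1))A + (r-1)B$, so Lemma~\ref{basicshirzebruch}(2) gives
\begin{align*}
C \cdot (\sigma) &= (\alpha + m(r+1))(r-1) + (r+1)(N + m(r-1)) - m(r+1)(r-1) \\
&= N(r+1) + (r-1)(\alpha + m(r+1)).
\end{align*}
Using $N(r+1) = n - \frakd$ then delivers $d \geq \frakd - (r-1)(\alpha + m(r+1))$.

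Finally, for locality I would arrange the construction so that each $u_i = 1$ and each $y_{i,j} = 1$, restrict to the affine chart $\{u = y = 1\} \cong \Aff^1_x \times \Aff^1_t$ of $\F(m)$, and dehomogenize each $\sigma \in V_{r-1,N+m(r-1)}$ to a polynomial $a_0(t) + a_1(t)x + \cdots + a_{r-1}(t)x^{r-1}$ in this chart. The embedding $(x,t) \mapsto (x, x^2, \ldots, x^{r-1}, t)$ then places the $r+1$ chosen points in each fiber on an affine rational normal curve, so no $r$ of them lie on a hyperplane, and Lemma~\ref{lem:BaselineRecoverability} delivers locality $r$. The intersection-theoretic calculation for the lower bound is the conceptual heart of the argument; the remaining pieces are routine bookkeeping, with the parity issue in the ceiling being the one mild wrinkle to navigate.
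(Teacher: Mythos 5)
Your proposal is correct and follows essentially the same route as the paper: dimension from Lemma~\ref{basicshirzebruch}(4), upper bound on $d$ via~\eqref{dist-bound} with the same parity observation on $\lceil k/r\rceil$ (the paper simply notes that in the odd case the ceiling increases, which only tightens $n-k-\lceil k/r\rceil+2$ and so preserves the stated inequality), lower bound on $d$ via the intersection number $C\cdot(\sigma)$, and locality by restricting to the chart $\{y=u=1\}$ and invoking Lemma~\ref{lem:BaselineRecoverability} through the rational-normal-curve embedding. Your extra remark that injectivity of the evaluation map follows once the lower bound for $d$ is positive is a sound point that the paper leaves implicit.
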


\begin{proof}
By Lemma~\ref{hirzebruch:vectorspace}, we have
\begin{equation}
\label{eq:dimRRforHirzebruch}
k = \dim V_{r-1,N + m(r-1)} = r(N+1)+m\frac{r(r-1)}{2}.
\end{equation}
Next, if $r$ is odd or $m$ even, we have
\[
 \left\lceil\frac{k}{r}\right\rceil= N+1 +m\frac{(r-1)}{2}.
\]
Otherwise, 
\[
 \left\lceil\frac{k}{r}\right\rceil= N+1 +m\frac{(r-1)}{2} +\frac{1}{2} \ge N+1 +m\frac{(r-1)}{2}.
\]
Hence, an upper bound for $d$ using~\eqref{dist-bound} is
\begin{align*}
 d &\leq n-k-\left\lceil\frac{k}{r}\right\rceil + 2 \\
 &\le n-r(N+1)-m\frac{r(r-1)}{2} - (N+1)-m\frac{(r-1)}{2}+2\\
 &=n-(n-\frakd) - (r+1) -m\frac{r(r-1)}{2} -m\frac{(r-1)}{2}+2\\
 &= \frakd-(r-1)-m\frac{(r^2-1)}{2}.
\end{align*}
As in the proof of Proposition~\ref{prop:codesP1P1}, a lower bound for the minimum distance of $\calC$ is
\begin{align*}
d &\geq n - \max_{\sigma \in V} \#\left({C}\cap (\sigma)\right)\\
&\geq n - {C}\cdot (\sigma) \text{ for any }\sigma \in V_{r-1,N + m(r-1)}
\end{align*}
Since the equation for {$C$} is an element of $V_{r+1,\alpha + m(r+1)}$, we may use Lemma~\ref{basicshirzebruch}(2) to compute
\begin{align*}
{C}\cdot (\sigma) &= \left( (\alpha + m(r+1))A + (r+1)B \right)\cdot \left( (N + m(r-1))A + (r-1)B \right)\\
&= (r-1)(\alpha + m(r+1)) + (N + m(r-1))(r+1) - m(r^2-1) \\
&= (r-1)(\alpha + m(r+1)) + n - \frakd,
\end{align*}
and hence
\[
d \geq \frakd - (r-1)(\alpha + m(r+1)).
\]
as claimed. 
Finally, the locality is $r$ by the same argument as in the end of the proof of Proposition \ref{prop:codesP1P1}. 
\end{proof}

\begin{remark}
When $m=0$, we have {$S = \mathbb{F}(0)= \mathbb{P}^1\times\mathbb{P}^1$}.  In this case, the bounds on the distance for $\calC$ coincide with the bounds of Proposition~\ref{prop:coarseprojective}, as one would expect.
\end{remark}

\begin{remark}
The upper and lower bounds for the minimum distance in Proposition~\ref{prop:coarsehirzebruch} meet when
\[
1 + m\frac{(r+1)}{2} = \alpha + m(r+1).
\]
Since $\alpha$, $m$ and $r$ are non-negative, we must have $m=0$ (i.e., $S = \PP^1\times \PP^1$) and $\alpha =1$.
\end{remark}

{
\subsection{Refining the Construction}
Consider the curve $C\subset S$ with affine model given by
\begin{equation*}
C  : x^{r+1} = t^{\alpha} + 1.
\end{equation*}
The projective closure of this curve in $S$ is given by:
\begin{equation}
 u^{\alpha + m(r+1)}x^{r+1} - (t^{\alpha}+u^{\alpha})y^{r+1} = 0.
\end{equation}
The left hand side of the above equation is an element of the vector space $V_{r+1,\alpha + m(r+1)}$.

To construct evaluation codes using {$C$}, as usual, pick $b$ points $(t_i:u_i) \in \PP^1_{(t:u)}(\F_q)$ such that the fiber $p^{-1}((t_i:u_i))\cap {C}$ consists of $r+1$ distinct points 
\[
(x_{i,1}:y_{i,1}),\dots,(x_{i,r+1}:y_{i,r+1}).
\]
Put
\[
P_{i,j} = \left( (x_{i,j}:y_{i,j}),(t_i:u_i) \right) \in S(\F_q),
\]
so that there are $n = b(r+1)$ points of the form $P_{i,j}$ in total.  For the vector space of function on which we evaluate the $P_{i,j}$, we constrain the degree in $t$ of the polynomials $a_i(t,u)$, as follows:
\[
\deg_t a_i(t,1) \leq N + im - \left\lceil\frac{i(\alpha + m(r+1))}{r+1}\right\rceil.
\]
{Again, this requires $N \ge \left\lceil{\alpha (r-1)}/(r+1)\right\rceil$, which we now assume}.
In other words, setting
\[
\epsilon_i := \left\lceil\frac{i(\alpha  + m(r+1))}{r+1}\right\rceil,
\]
we assume that for each $0\leq i \leq r-1$,
\[
a_i(t,u) = u^{\epsilon_i} \cdot a_i'(t,u)
\]
for a homogeneous polynomial $a_i'(t,u)$. When this is the case, the calculation~\eqref{eq:dimRRforHirzebruch} shows that the vector space of functions
\[
V := \{ \sigma \in V_{r-1,N + m(r-1)} : \sigma = a_0(t,u)y^{r-1} + u^{\epsilon_1}\cdot a_1(t,u)y^{r-2}x + \cdots + u^{\epsilon_{r-1}} a_{r-1}(t,u)x^{r-1} \}
\]
has dimension
\[
k = r(N+1)+m\frac{r(r-1)}{2} - \sum_{i=0}^{r-1}\epsilon_i
\]
If $\alpha = r+1$ then 
\[
\sum_{i=0}^{r-1}\epsilon_i = \sum_{i=0}^{r-1}i + im = (m+1)\frac{r(r-1)}{2}
\]
Otherwise, if $r+1 > \alpha$ then 
\begin{align*}
\sum_{i=0}^{r-1}{\epsilon_i} &= \sum_{i = 0}^{r-1}\left\lceil\frac{i\alpha}{r+1}\right\rceil + im \\
&= \frac{(\alpha + 1)(r+1)}{2} - 2\alpha + m\frac{r(r-1)}{2}.
\end{align*}
where the second equality follows by our work in the proof of Proposition~\ref{prop:codesP1P1}.
We conclude that
\[
k = \begin{cases}
r(N+1) - \frac{r(r-1)}{2} \, , \text{if } r+1= \alpha,  \, \text{and} \\
r(N+1) + 2\alpha - \frac{(\alpha + 1)(r+1)}{2} - m\frac{r(r-1)}{2} \, , \text{if } r+1 > \alpha,
\end{cases}
\]

\begin{theorem}
\label{prop:refinedhirzebruch}
Assume that $\alpha \mid (r+1)$ and $(r+1) \mid (q-1)$. The code
\[
\calC := \{ \left(\sigma(P_{i,j})\right)_{1\leq i \leq b, 1\leq j \leq r+1} : \sigma \in V \}
\]
has locality $r$ and its parameters satisfy
\begin{align*}
n &= b(r+1),\\
k &=
\begin{cases}
r(N+1) - \frac{r(r-1)}{2} \, , \text{if } r+1= \alpha,  \, \text{and} \\
r(N+1) + 2\alpha - \frac{(\alpha + 1)(r+1)}{2} - m\frac{r(r-1)}{2} \, , \text{if } r+1 > \alpha,
\end{cases} \\
d &\leq \frakd + \frac{(\alpha - 1)(r-3)}{2} - \left\lceil \frac{2\alpha}{r} - \frac{(\alpha + 1)(r+1)}{2r}\right\rceil + m\frac{(r^2 - 1)}{2}, \\
d &\geq \frakd.
\end{align*}
\end{theorem}

\begin{proof}
We have already discussed the values of $n$ and $k$ above.  The upper bound for $d$ is obtained from~\eqref{dist-bound}, proceeding as in the proof of Proposition~\ref{prop:codesP1P1}.

For the lower bound on the distance, we note that, as before,
\[
d \leq \max_{\sigma \in V} \#\left(C\cap (\sigma)\right),
\]
just as in Proposition~\ref{prop:coarseprojective}. In the course of the proof of Proposition~\ref{prop:coarsehirzebruch}, we saw that
\[
C\cdot (\sigma) = (r-1)(\alpha + m(r+1)) + n - \frakd.
\]
However, for every $\sigma \in V$, the curves $C$ and $(\sigma)$ intersect at the point
\[
[x,y;t,u] = [1,0;1,0] \in \F(m).
\]
We claim this happens with multiplicity \emph{at least} $(r-1)(\alpha + m(r+1))$, and hence
\[
\max_{\sigma \in V} \#\left(C\cap (\sigma)\right) \leq C\cdot (\sigma) - (r-1)(\alpha + m(r+1)) = n - \frakd,
\]
from which we deduce that
\[
d \geq n - \max_{\sigma \in V} \#\left(C\cap (\sigma)\right) \geq \frakd.
\]
The claim on the multiplicity is established as in the proof of Proposition~\ref{prop:codesP1P1}: the point $[1,0;1,0] \in \F(m)$ is the origin of the affine patch of $C$ given by
\[
u^{\alpha + m(r+1)} = (1 + u^\alpha)y^{r+1},
\]
In the complete local ring of $C$ at the origin, the quantity $1 + u^\alpha$ has an $(\alpha + m(r+1))$-th root. Let $\zeta$ denote an $(\alpha + m(r+1))$-th root of unity in an algebraic closure of $\F_q$. Geometrically, $C$ has $\alpha + m(r+1)$ branches at the origin:
\[
{u = wy^{(r+1)/(\alpha + m(r+1))},u = \zeta wy^{(r+1)/(\alpha + m(r+1))},\dots, u = \zeta^{(\alpha + m(r+1)) -1}wy^{(r+1)/(\alpha + m(r+1))},}
\]
For each one of these branches, $y$ is a uniformizer for the maximal ideal at the origin of $C$, and $u$ has valuation $(r+1)/(\alpha + m(r+1))$ with respect to this uniformizer (see the proof of Proposition~\ref{prop:codesP1P1} for more details).  For $\sigma \in V$, a local equation for $(\sigma)$ in the affine patch {$\Aff^2_{(y,u)}$} is
\[
{a_0(1,u)y^{r-1} + u^{\epsilon_1}\cdot a_1(1,u)y^{r-2} + \cdots + u^{\epsilon_{r-1}} a_{r-1}(1,u) = 0}
\]
The monomial $u^{\epsilon_i}y^{r-1-i}$ has $\frakm$-adic valuation
\[
\left\lceil\frac{i(\alpha + m(r+1))}{r+1}\right\rceil\cdot\frac{r+1}{(\alpha + m(r+1))} + r - 1 - i.
\]
As $i$ ranges through $0,\dots,r-1$, the \emph{smallest} value of this quantity is $r-1$. Hence, on each branch of $C$ the minimal valuation at the origin of $\sigma \in V$ is $r-1$, and therefore $C$ and $(\sigma)$ intersect at $[1,0;1,0]$ with multiplicity $\geq \alpha(r-1)(\alpha + m(r+1))$.  This concludes the proof of the lower bound for $d$.
\end{proof}
}

When $m=0$, we recover Proposition~\ref{prop:codesP1P1}. The parameters get
slightly worse for $m>0$ but this more general construction might still be
interesting.

\section{Locally recoverable codes from elliptic surfaces}

\subsection{Elliptic surfaces}

The definitions of this section hold over an arbitrary field $k$. 

An algebraic surface $\calE$ is called an \defi{elliptic surface} if it is endowed with a morphism $\pi: \calE \to B$ to a base algebraic curve $B$ such that

\begin{enumerate}
 \item[i)] for all but finitely many $t \in B(\bar{k})$, the fiber $\pi^{-1}(t)$ is a genus one curve, where $\bar{k}$ is a fixed algebraic closure of $k$.
 \item[ii)] there is a section $\sigma$ to $\pi$, i.e., a morphism $\sigma: B \rightarrow \calE$ such that $ \pi \circ \sigma = \mathrm{id}_B$.
\end{enumerate}
                                                                                                                                                                  
The morphism $\pi$ is called an elliptic fibration. Condition ii) implies that all but finitely many fibers of $\pi$ are indeed elliptic curves.

	Let $\pi: \calE \to B$ be an elliptic fibration. A section $P: B \to \calE$ is, by definition, a regular map such that $\pi \circ P$ is the identity on $B$. We denote by $\calO$ the zero section and by abuse of notation
also the zero element of any fiber. The set of sections of the fibration $\pi$ in the above sense can be made into an
abelian group with identity $\calO$ (in the same way one defines the group law on an elliptic curve). This group is
called the Mordell-Weil group of $\calE$ and it is finitely generated by the N\'eron-Severi-Mordell-Weil theorem.

We also have that $ \calE$ has a Weierstrass equation

	\[
		y^2+a_1xy+a_3y = x^3+a_2x^2+a_4x+a_6
	\]
where $a_i \in k(B)$.  We consider the divisor $D = n\cdot\infty + m\cdot \calO$, where $\infty$ is the ``fiber above $\infty$'', and $\calO$ is the zero section. A function on $\calE$ whose polar divisor is bounded by $D$ is of the form
	\[
		\sum_{2i \leq m}\alpha_i x^i + \sum_{2i + 3 \leq m} \beta_i x^iy,
	\]
where $\alpha_i$ and $\beta_i$ are functions in the Riemann Roch space $\calL(B,n\cdot \infty)$.

Each fiber $E$ is embedded in $\PP^{n-1}$ by the linear system $|n\calO|$ (where
$\calO$ is the identity of $E$).

\subsection{General code construction}
\bigskip

Let $\pi: \calE \to B$ be an elliptic fibration. We denote by $\calO$ the zero section and by abuse of notation
also the zero element of any fiber. We denote by $E_t= \pi^{-1}(t)$ the fiber above $t$ and by $E_t[2]$ its subgroup
of elements of order at most $2$.

\begin{lemma} 
\label{ell-recover}
Assume that for each $t$ in a subset of $B(\F_q)$
such that the fiber $E_t$  over $t$ is an elliptic curve, 
we are given $\Gamma_t \subset E_t(\F_q)-E_t[2]$ all of same cardinality $r+1$ for some integer $r$
with the property that $\sum_{P \in \Gamma_t} P \in E_t[2]$ in the group law of $E_t$.

Let $\Gamma = \bigcup_t \Gamma_t$ and $V$ a finite-dimensional $\F_q$-vector space of functions on $\calE$
such that the restriction of any element of $V$ to a fiber above any $t$ is in the Riemann-Roch space $\calL(E_t,r\calO)$.
We form a code $\calC$ by evaluating the functions on $V$ on the points of $\Gamma$. The code $\calC$ is locally recoverable with locality $r$.
\end{lemma}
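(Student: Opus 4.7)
The plan is to show, fiber by fiber, that for each $R\in\Gamma_t$ the value $\sigma(R)$ is determined by a fixed $\F_q$-linear combination of the values $\sigma(R')$ with $R'\in\Gamma_t\setminus\{R\}$, where the coefficients depend only on $E_t$ and $\Gamma_t$. Fix such a $t$, label $\Gamma_t=\{R_1,\dots,R_{r+1}\}$, and set $T:=\sum_i R_i\in E_t[2]$. The first step is a Riemann--Roch computation: since $E_t$ has genus $1$ and $\deg(r\calO)=r>0$, we have $\ell(E_t,r\calO)=r$.

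The second step is to show that the evaluation map $\ev\colon \calL(E_t,r\calO)\to\F_q^{r+1}$ at the points of $\Gamma_t$ is injective. Any nonzero $f\in\calL(E_t,r\calO)$ has polar divisor of degree at most $r$, hence at most $r$ zeros, so it cannot vanish simultaneously at $r+1$ distinct points. Consequently $\im(\ev)$ is an $r$-dimensional subspace of $\F_q^{r+1}$, cut out by a single linear equation $\sum_{i=1}^{r+1}\lambda_i y_i=0$ with $\lambda_i\in\F_q$.

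The crux is to show every $\lambda_i$ is nonzero; this is where the conditions $T\in E_t[2]$ and $\Gamma_t\cap E_t[2]=\emptyset$ enter. If some $\lambda_i$ vanished, there would exist a nonzero $f\in\calL(E_t,r\calO)$ vanishing at all $R_j$ with $j\neq i$. A degree count forces the polar divisor of $f$ to be exactly $r\calO$, so $\divv(f)=\sum_{j\neq i}R_j-r\calO$. Its class in $\Pic^0(E_t)\cong E_t$ is $\sum_{j\neq i}R_j=T-R_i$, so principality demands $R_i=T\in E_t[2]$, contradicting the hypothesis $\Gamma_t\cap E_t[2]=\emptyset$.

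With every $\lambda_i\neq 0$, the defining equation of $\im(\ev)$ yields
\[
\sigma(R_i)=-\lambda_i^{-1}\sum_{j\neq i}\lambda_j\sigma(R_j),
\]
so the recovery set $J_{R_i}=\Gamma_t\setminus\{R_i\}$ has size $r$, and the coefficients are intrinsic to $E_t$ and $\Gamma_t$, not to $\sigma$. The main obstacle I anticipate is the third step: the 2-torsion hypothesis is essential there, since without the failure of $T-R_i$ to vanish for each $i$, the putative function $f$ would actually exist and the recovery scheme would break; the rest of the argument is a straightforward application of Riemann--Roch on an elliptic curve together with linear algebra.
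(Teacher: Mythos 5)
Your proof is correct and follows essentially the same line as the paper's. Both rest on the two key facts that $\ell(E_t,r\mathcal O)=r$ by Riemann--Roch, and that a nonzero $f\in\calL(E_t,r\calO)$ vanishing on the $r$ points $\Gamma_t\setminus\{R_i\}$ would force $\divv(f)=\sum_{j\neq i}R_j-r\calO$ and hence $R_i=T\in E_t[2]$, contradicting $\Gamma_t\cap E_t[2]=\emptyset$. The paper phrases this as a direct injectivity-plus-dimension-count argument for the $r$-point evaluation map $\calL(E_t,r\calO)\to\F_q^r$; you instead embed into $\F_q^{r+1}$, identify the image as a hyperplane $\sum\lambda_i y_i=0$, and show each $\lambda_i\neq 0$ — the condition $\lambda_i\neq 0$ is exactly equivalent to injectivity of the paper's $r$-point map after dropping the $i$-th coordinate, so this is a repackaging rather than a new route. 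Your version does have the small advantage of making the recovery formula $\sigma(R_i)=-\lambda_i^{-1}\sum_{j\neq i}\lambda_j\sigma(R_j)$ explicit, exhibiting the linearity of the recovery map directly, whereas the paper leaves the recovery as ``invert the evaluation isomorphism and evaluate.''
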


\begin{proof}
Given a function $f$ and codeword $c = (f(P))_{P \in \Gamma}$ and suppose we need to recover $f(P_0)$. We have that
$P_0 \in \Gamma_t$ for some $t$. Now, the restriction of $f$ to $E_t$ is a rational function $f_t$ on $E_t$,
which is an element of the Riemann-Roch space $\calL(E_t,r\calO)$. We claim that $f_t(P_0)$ can be uniquely recovered
from the values of $f_t(P), P \in \Gamma_t - \{P_0\}$. If there are two such functions with the same values, their
difference vanishes at $\Gamma_t - \{P_0\}$ but has a pole of order at most $r$ at $\calO$. The only possibilty is that this function then has simple
zeros at the points of $\Gamma_t - \{P_0\}$, a pole of order $r$ at $\calO$
and no other zeros or poles. That would imply, using Abel's theorem on $E_t$
(\cite[Corollary III 3.5]{Sil}), that
$\sum_{P \in \Gamma_t - \{P_0\}} P = \calO$ and thus $P_0 \in E_t[2]$, which contradicts our hypothesis. This shows that
the map $L(r\calO) \to \F_q^r, h \mapsto (h(P))_{P \in \Gamma_t - \{P_0\}}$ is injective. As these spaces have the same
dimension by Riemann-Roch, it is also surjective.
\end{proof}

A natural example is to take sections $P_i,i=1,\ldots,r$ of the elliptic fibration $\pi: \calE \to B$.
If we let $P_{r+1} = -\sum_{i=1}^r P_i$ and $\Gamma_t = \{P_1(t),\ldots, P_{r+1}(t)\}$,
we are in the above situation. 

We can also use an irreducible curve $C$ in $\calE$.
Then we have a map $C \to B$ and we assume that it has
degree $r+1$ and take as $\Gamma_t$ the fibers of this map above points that split completely. 
To ensure that the points of $\Gamma_t$ add to zero we need to check the algebraic point defined by $C$ has trace 
zero. Often the following lemma is useful.

\begin{lemma}
\label{trace}
Let $\pi: \calE \to B$ be an elliptic surface with finite Mordell-Weil group
of order prime to the characteristic of $k$. 
Let $C$ be an irreducible curve in $\calE$
such that the map $C \to B$ is separable of degree $r+1$.
If, for one $t \in B$ with $\pi^{-1}(t)$ an elliptic curve and whose preimage $\Gamma_t = (\pi{|_C})^{-1}(t)$ in $C$ has $r+1$ distinct points we have that 
$\sum_{P \in \Gamma_t} P = \calO$, then for all other such $t$, we also have $\sum_{P \in \Gamma_t} P = \calO$.
\end{lemma}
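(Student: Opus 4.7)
The idea is to recognize the sum $\sum_{P \in \Gamma_t} P$ as the specialization at $t$ of a single element of the Mordell--Weil group of $\calE$, and then invoke injectivity of specialization on torsion sections at fibers of good reduction.

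First, I would package $C$ as an algebraic point of the generic fiber of $\pi$. Because $C$ is irreducible and $C \to B$ is separable of degree $r+1$, the generic point of $C$ cuts out a closed point $\tilde P$ of the generic fiber $\calE_\eta$ whose residue field $K/\F_q(B)$ is a separable extension of degree $r+1$. Setting $T := \Tr_{K/\F_q(B)}(\tilde P) \in \calE_\eta(\F_q(B))$ yields an $\F_q(B)$-point of $\calE_\eta$, which extends uniquely to a regular section $T \colon B \to \calE$. For any $t$ at which $\pi|_C$ is unramified with $r+1$ distinct $\F_q$-rational preimages, the Galois conjugates of $\tilde P$ specialize precisely to the points of $\Gamma_t$, so $T(t) = \sum_{P \in \Gamma_t} P$ in the group law of $E_t$. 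The finite Mordell--Weil hypothesis makes $T$ a torsion section.

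Then I would conclude by showing that if $T(t_0) = \calO$ at one fiber $E_{t_0}$ of good reduction, then $T$ is identically the zero section, from which the claim follows: $T(t) = \calO$ for every other $t$ with $E_t$ smooth and $\Gamma_t$ unramified. The required fact is the standard injectivity of specialization on torsion: for $n$ coprime to $\Char \F_q$, the torsion subgroup scheme $\calE[n]$ is étale over the open subset of $B$ where $\pi$ has good reduction, and a section of an étale group scheme over a connected base is determined by its value at a single geometric point.

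The main technical hurdle will be handling any $p$-primary part of the Mordell--Weil group, where specialization can fail to be injective; for this, one can appeal to the structure of the Néron model more carefully, or, as is typical in the examples considered in this paper, arrange that the Mordell--Weil group has order prime to $\Char \F_q$, in which case the statement is automatic.
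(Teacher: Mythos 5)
Your proof takes essentially the same route as the paper's: form the trace of the multisection $C$ to get a section $T$ of $\pi$, observe that $T(t) = \sum_{P \in \Gamma_t} P$, use finiteness of the Mordell--Weil group to conclude $T$ is torsion, and invoke injectivity of specialization for torsion sections at smooth fibers. Your caveat about the $p$-primary part is actually a pertinent one that the paper passes over in silence: the ``known fact'' that torsion sections specialize injectively at good fibers is clean only for torsion of order prime to $\Char\F_q$ (since $\calE[p]$ need not be \'etale, and indeed a nonzero $p$-torsion section of an elliptic surface would be forced to meet the zero section at any supersingular fiber of good reduction), so the statement as given implicitly assumes the relevant trace section is prime-to-$p$ torsion, which holds in the paper's applications but is not a consequence of the stated hypotheses alone.
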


\begin{proof}
We can base change $\pi: \calE \to B$ to $\pi': \calE' \to C$ via $C \to B$ and $C$ itself pulls back to a section $s$ of $\pi'$
and we can then take the $C \to B$ trace of this section to get a section of $\pi$. Concretely, this section consists of adding
the points on $(\pi{|_C})^{-1}(t)$ and viewing that as a function of $t \in B$. By the assumption on the Mordell-Weil group, this section is of finite order. From \cite[Proposition VII 3.1]{Sil}, for sections of finite order prime to the characteristic, the specialization map to a smooth fiber
is injective. By assumption, for one such fiber, the specialization of $s$ is zero. It follows that $s$ itself is zero.
\end{proof}

Here are some explicit examples. 

\begin{example}
Take $\calE$ the Legendre family $y^2=x(x-1)(x-t)$ and consider the curve
$C: (u^2+t+1)^2=u(u-1)(u-t)$ of genus $1$ embedded in $\calE$ by taking $x=u,y=u^2+t+1$, so $r=3$. 
Lemma \ref{trace} applies with $t=-1$. If $\Gamma$ has $n$ points and
$d < n, 4|(n-d)$, we consider functions of the form $f=a(t)+b(t)x+c(t)y$ with $\deg a \le (n-d)/4, \deg b, \deg c < (n-d)/4$
and these restrict to $C$ as a function of degree at most $n-d$, so the minimum distance is at least $d$. The 
dimension is $k=3(n-d)/4 +1$ and it follows that 
$d = n -k  -\lceil k/3 \rceil +2$, i.e., the code is optimal, but 
typically not as long as the optimal codes from the previous sections. 
\end{example}

\begin{example}
Let $\calE$ be the elliptic surface $y^2=x^3+x-t^2-1$ over $\F_q$
and $C$ the curve
given by $x=y^2$ inside $\calE$, which is $y^6=t^2+1$. The elliptic surface
has trivial Mordell Weil group over $\F_q(t)$ so the multisection 
corresponding to $C$ automatically has trace zero. This leads to the same
family of codes corresponding to the case $r=5$ of subsection \ref{cyclic} by 
considering evaluation on functions of the form $f=a_0(t)+a_1(t)x+a_2(t)y+a_3(t)x^2+a_4(t)xy$.
\end{example}

\begin{example}
We can also recover the case $r=3$ of subsection \ref{cyclic} by taking
$\calE$ to be the elliptic surface $y^2+xy=x^3+t^2+2$ over $\F_q$
and $C$ the curve given by $x^2=y=u$ inside $\calE$, which is $u^4=t^2+2$ and
evaluation on functions of the form $f=a_0(t)+a_1(t)x+a_2(t)y$.
We can take, for $q=5,13$ respectively, sets of size 
$b=2,4$ and get codes of length $n=8,16$.
\end{example}

Yet another example is a variant of the examples constructed by Ulmer \cite{Ulmer} leading to the following theorem. 

\begin{theorem}
\label{ell-thm}
For every odd prime (power) $p$ and integer $d \le 2(p+1)(p-2), (p+1)|d$, 
there exists a locally recoverable code $\calC$ over $\F_{p^2}$ of recoverability $p$, length $n=2(p+1)(p-2)$,
minimum distance $d$ and dimension 
$$k = \frac{p(n-d)}{p+1}-\frac{p-1}{2}.$$
\end{theorem}

\begin{proof}
Consider the surface $\calE: y^2=x(x+1)(x+t^2+1)$ over $\F_{p^2}$, $p$ odd and the curve $C$ 
defined by $u^{p+1} = t^2+1$. Then $C$ embeds in $\calE$ by taking $x = u, y = u(u+1)^{(p+1)/2}$.
The points on $C$ on the fiber above $t=b$ are of the form
$(c, c(c+1)^{(p+1)/2})$ for each $c$ satisfying $c^{p+1} = b^2+1$. The function $y(x+1)^{(p-1)/2}-(x+b^2+1)$ has degree
$p+2$ and vanishes on all these points and on the point $(-b^2-1,0)$ of order $2$. So lemma \ref{ell-recover}
applies once we exclude the points on $C$ with $c=0, c^{p+1}=1$. Each allowed value of $c$ gives two values of
$b$ since $c^{p+1}-1 \in \F_p$ so has square roots in $\F_{p^2}$. So we have $n=2(p+1)(p-2)$ points in $C$ we can
use to form $\Gamma$.

To construct a code we consider the following vector space, where $x_i = x^{(i+1)/2}, i$ odd and $x_i = yx^{(i-2)/2}, i$ even, $i>0$.

\[
V = \left\{ a_0(t) + \sum_{i = 1}^{p-1} a_i(t)x_i : \deg a_i \leq N_i \text{ for } i = 0,\dots,p-1\right\}
\]

where $N_0 = \frac{n-d}{p+1}$,

\begin{equation}
\begin{split}
N_i = \frac{n-d}{p+1} -1,\, i\, \text{ odd },\\
N_i = \frac{n-d }{p+1} -2,\, i\, \text{ even }, i > 0.
\end{split}
\end{equation}

\noindent
chosen so that the elements of $V$ restrict to functions of degree $n-d$ on $C$ and the codewords have weight
at least $d$. The dimension $k$ satisfies $k = \sum_{i=0}^{p-1} (N_i+1)$ and the result follows.
\end{proof}

\begin{remark}
Note that, in the above theorem $d_{\opt} = n-k-\lceil k/p \rceil +2 = d + (p+3)/2$.
\end{remark}

\section*{Acknowledgements}
The authors would like to thank the following institutions for providing the opportunity for them to meet and/or for financial
support: IMPA, BIRS-Oaxaca, MPIM, IHP, University of Canterbury and M. Stoll's Rational Points workshop series. Cec\'ilia Salgado was partially supported by FAPERJ grant E-26/203.205/2016,  the Serrapilheira Institute (grant number Serra-1709-17759), Cnpq grant PQ2 310070/2017-1 and the Capes-Humboldt program. Anthony V\'arilly-Alvarado was partially supported by NSF grants DMS-1352291 and DMS-1902274. 
Jos\'e Felipe Voloch was partially supported by the Simons Foundation (grant \#234591) and the Marsden Fund Council administered by the Royal Society of New Zealand. He would also like to thank A. Dimakis for a helpful conversation.
{Finally, the authors would like to thank the referees for a careful reading
and a number of suggestions to the manuscript.}


\begin{bibdiv}
\begin{biblist}

\bib{AtiyahMacdonald}{book}{
   author={Atiyah, M. F.},
   author={Macdonald, I. G.},
   title={Introduction to commutative algebra},
   publisher={Addison-Wesley Publishing Co., Reading, Mass.-London-Don
   Mills, Ont.},
   date={1969},
   pages={ix+128},
}
		
\bib{BargEtAl2017}{article}{
   author={Barg, A.},
   author={Haymaker, K.},
   author={Howe, E. W.},
   author={Matthews, G. L.},
   author={V\'{a}rilly-Alvarado, A.},
   title={Locally recoverable codes from algebraic curves and surfaces},
   conference={
      title={Algebraic geometry for coding theory and cryptography},
   },
   book={
      series={Assoc. Women Math. Ser.},
      volume={9},
      publisher={Springer, Cham},
   },
   date={2017},
   pages={95--127},
}

\bib{BargTamoEtAl2015}{article}{
   author={Barg, A.},
   author={Tamo, I.},
   author={Vl\u{a}du\c{t}, S.},
   title={Locally recoverable codes on algebraic curves},
   journal={IEEE Trans. Inform. Theory},
   volume={63},
   date={2017},
   number={8},
   pages={4928--4939},
   issn={0018-9448},
}

\bib{Coz}{article}{
	author = {Coskun, I.},
	AUTHOR = {Huizenga, S.},
	Title= {Brill-Noether theorems and globally generated vector bundles on Hirzebruch surfaces},
	year = {2017},
	note = {Preprint, arXiv:1705.08460},
	}

\bib{GHM}{article}{
   author={Galindo, C.},
   author={Hernando, F.},
   author={Munuera, C.},
   title={Locally recoverable $J$-affine variety codes},
   journal={Finite Fields Appl.},
   volume={64},
   date={2020},
   pages={101661, 22},
   issn={1071-5797},
}

\bib{GopalanHuangEtAl2012}{article}{
   author={Gopalan, P.},
   author={Huang, C.},
   author={Simitci, H.},
   author={Yekhanin, S.},
   title={On the locality of codeword symbols},
   journal={IEEE Trans. Inform. Theory},
   volume={58},
   date={2012},
   number={11},
   pages={6925--6934},
   issn={0018-9448},
}

\bib{Guruswami2018}{article}{
   author={Guruswami, V.},
   author={Xing, C.},
   author={Yuan, C.},
   title={How long can optimal locally repairable codes be?},
   journal={IEEE Trans. Inform. Theory},
   volume={65},
   date={2019},
   number={6},
   pages={3662--3670},
   issn={0018-9448},
}

\bib{Hartshorne}{book}{
   author={Hartshorne, R.},
   title={Algebraic geometry},
   note={Graduate Texts in Mathematics, No. 52},
   publisher={Springer-Verlag, New York-Heidelberg},
   date={1977},
   pages={xvi+496},
   isbn={0-387-90244-9},
}

\bib{Malmskog2016}{article}{
   author={Haymaker, K.},
   author={Malmskog, B.},
   author={Matthews, G. L.},
   title={Locally recoverable codes with availability $t\geq 2$ from fiber
   products of curves},
   journal={Adv. Math. Commun.},
   volume={12},
   date={2018},
   number={2},
   pages={317--336},
   issn={1930-5346},
}

\bib{MT}{article}{
   author={Munuera, C.},
   author={Ten\'{o}rio, W.},
   title={Locally recoverable codes from rational maps},
   journal={Finite Fields Appl.},
   volume={54},
   date={2018},
   pages={80--100},
   issn={1071-5797},
}

\bib{MTT}{article}{
   author={Munuera, C.},
   author={Ten\'{o}rio, W.},
   author={Torres, F.},
   title={Locally recoverable codes from algebraic curves with separated
   variables},
   journal={Adv. Math. Commun.},
   volume={14},
   date={2020},
   number={2},
   pages={265--278},
   issn={1930-5346},
}

\bib{Xing}{article}{
   author={Li, X.},
   author={Ma, L.},
   author={Xing, C.},
   title={Optimal locally repairable codes via elliptic curves},
   journal={IEEE Trans. Inform. Theory},
   volume={65},
   date={2019},
   number={1},
   pages={108--117},
   issn={0018-9448},
}

\bib{Dimakis}{article}{
   author={Papailiopoulos, D. S.},
   author={Dimakis, A. G.},
   title={Locally repairable codes},
   journal={IEEE Trans. Inform. Theory},
   volume={60},
   date={2014},
   number={10},
   pages={5843--5855},
   issn={0018-9448},
}

\bib{Reid}{article}{
   author={Reid, M.},
   title={Chapters on algebraic surfaces},
   conference={
      title={Complex algebraic geometry},
      address={Park City, UT},
      date={1993},
   },
   book={
      series={IAS/Park City Math. Ser.},
      volume={3},
      publisher={Amer. Math. Soc., Providence, RI},
   },
   date={1997},
   pages={3--159},
}

\bib{Sil}{book}{
author={Silverman, J.}
title={The arithmetic of elliptic curves},
   series={Graduate Texts in Mathematics},
   volume={106},
   edition={2},
   publisher={Springer, Dordrecht},
   date={2009},
}

\bib{TamoBarg2014}{article}{
   author={Tamo, I.},
   author={Barg, A,},
   title={A family of optimal locally recoverable codes},
   journal={IEEE Trans. Inform. Theory},
   volume={60},
   date={2014},
   number={8},
   pages={4661--4676},
   issn={0018-9448},
}

\bib{AGBook}{book}{
   author={Tsfasman, M.},
   author={Vl\u{a}du\c{t}, S.},
   author={Nogin, D.},
   title={Algebraic geometric codes: basic notions},
   series={Mathematical Surveys and Monographs},
   volume={139},
   publisher={American Mathematical Society, Providence, RI},
   date={2007},
   pages={xx+338},
   isbn={978-0-8218-4306-2},
}

\bib{Ulmer}{article}{
   author={Ulmer, D.},
   title={Explicit points on the Legendre curve},
   journal={J. Number Theory},
   volume={136},
   date={2014},
   pages={165--194},
   issn={0022-314X},
}
\end{biblist}
\end{bibdiv}

\end{document}